\DeclareMathOperator*{\range}{\textnormal{range}}
\DeclareMathOperator*{\spec}{\textnormal{spec}}
\DeclareMathOperator*{\Z}{\mathbb{Z}}
\DeclareMathOperator*{\diag}{\textnormal{diag}}
 \DeclareMathOperator{\spn}{span}
\theoremstyle{thmstyletwo}%
\newtheorem{theorem}{Theorem}[section]%  meant for continuous numbers
\newtheorem{proposition}[theorem]{Proposition}%
\newtheorem{corollary}[theorem]{Corollary}
\newtheorem{example}{Example}[section]%
\newtheorem{remark}{Remark}
\newtheorem{lemma}[theorem]{Lemma}%
\newtheorem{question}[theorem]{Question}%
\newtheorem{definition}{Definition}
\def\hdmd{\odot}
\numberwithin{equation}{section}
\begin{document}

\title{Convolutional dynamical sampling and some new results}

\author{Longxiu Huang} 
\address{
Department of Computational Mathematics Science and Engineering and Department of Mathematics, Michigan State University, USA}
\email{huangl3@msu.edu}

\author{A. Martina Neuman}
\address{Department of Computational Mathematics Science and Engineering, Michigan State University, USA}
\email{neumana6@msu.edu}

\author{Sui Tang}
\address{Department of Mathematics, University of California Santa Barbara, USA}
\email{suitang@math.ucsb.edu}

\author{Yuying Xie}
\address{Department of Computational Mathematics Science and Engineering and Department of Statistics and Probability, Michigan State University, USA}
\email{xyy@msu.edu}

 \begin{abstract} In this work, we explore the dynamical sampling problem on $\ell^2(\mathbb{Z})$ driven by a convolution operator defined by a convolution kernel. This problem is inspired by the need to recover a bandlimited heat diffusion field from space-time samples and its discrete analogue. In this book chapter, we review recent results in the finite-dimensional case and extend these findings to the infinite-dimensional case, focusing on the study of the density of space-time sampling sets.
\end{abstract}

\maketitle

\vspace{-25pt}

\section{Introduction} \label{sec:intro}

Dynamical sampling, inspired by the works of \cite{lu2009spatial, lu2011localization}, operates on the principle that spatial sampling limitations can be offset by utilizing temporal evolution during data recovery \cite{aldroubi2013dynamical}. The time-dependent characteristics of signal evolution, driven by an external force, enhance the quality of the collected samples \cite{huang2021robust}, distinguishing dynamical sampling from traditional static methods \cite{candes2006stable, sun2007nonuniform, nashed2010sampling}. This suggests that a space-time trade-off is practical in many industrial settings where the limited availability of sampling devices can be compensated by their increased use, especially where cost or space constraints exist. The relevant works include the mobile sampling method developed in \cite{unnikrishnan2012sampling, grochenig2015minimal, ulanovskii2021reconstruction, zlotnikov2022planar, jaye2022sufficient}, which applies the space-time trade-off principle in time-varying bandlimited fields. In these methods, mobile sensors move along continuous paths through space, taking measurements along their trajectories. And wireless sensor networks represent a key application of this concept \cite{hormati2009distributed,lu2009spatial,  reise2010reconstruction, ranieri2011sampling,  reise2012distributed}.

The mathematical framework of dynamical sampling has been thoroughly developed in recent literature and is divided into two main categories. The first category addresses the characterization of feasible space-time samples that can enable recovery by linking the spectral properties of driving operators with sampling locations. Initial results in this area were developed for convolution-driven operators in discrete time scenarios \cite{aldroubi2013dynamical,aldroubi2015exact, aceska2015multidimensional}, with a complete characterization of the universal sampling set presented in \cite{tang2017system}. A general framework for bounded linear operators was subsequently established \cite{aldroubi2017dynamical, cabrelli2020dynamical}, which included extensions to phaseless sampling \cite{aldroubi2020phaseless,beinert2023phase}, random sampling with low-rank constraints \cite{huang2021robust,yao2023space}, and continuous-time sampling \cite{aldroubi2019frames,aldroubi2021sampling}. A comprehensive survey on this topic is available in \cite{christensen2021survey}. The second category focuses on recovering driving operators from space-time samples. This includes methods for convolution operators and their generalization \cite{aldroubi2016krylov,tang2017system,aldroubi2018dynamical} to the general matrix case using Prony-type methods and their stabilized variants \cite{cheng2021estimate}. More recent studies \cite{kummerle2022learning} have employed a lifting technique by transforming the problem into a low-rank block Hankel matrix completion challenge.

A central challenge in dynamical sampling is how to choose space-time sampling locations so that the recovery process is stable. Although a general tool via spectral theory has been developed, more refined results are greatly anticipated, particularly in connecting the density theorem for classical sampling problems with dynamical sampling. Given that convolution underpins the physics of diffusion processes, many researchers have focused on temporal evolution driven by a convolution operator. A recurring theme is that simple sampling across a discrete subgroup may be inadequate for signal reconstruction. Instead, a careful balance between sampling density and kernel profile is necessary for effective sampling. For instance, \cite{aldroubi2015exact} introduced a convolution kernel construction that renders spatiotemporal sampling on a proper subgroup \( m\mathbb{Z} \) of \( \mathbb{Z} \) unstable when \( m \geq 2 \), necessitating additional initial sampling. Similarly, \cite{tang2017universal} determined the necessary extra sample size for effective sampling on any subgroup of a finite Abelian group \( \mathbb{Z}_{N} \), and \cite{aceska2015multidimensional} applied a similar approach to multidimensional torsion groups. These studies emphasize the critical interplay between the convolution kernel and sampling density, a dynamic also highlighted in \cite{aldroubi2021sampling} in the continuous domain, where the heat kernel profile dictates density thresholds on \( \mathbb{R} \). Motivated by these insights, our study examines the intricate relationship between stability, spatial sampling density, and kernel profile during an evolution process on \( \mathbb{Z} \), and explores the inherent nature of this stability. To begin, we formulate our problem as follows.

\subsection{Problem statement}
In dynamical sampling, data are collected repeatedly over time from the fixed spatial locations. This sampled data  is represented by:
\begin{equation}
\label{eqn:general_DS}
S(f) = \{\langle A^{s}f, g\rangle : g \in \mathcal{G}, s \in T\},
\end{equation}
where $f \in \mathcal{H}$ is the initial signal, $A$ is the driving operator, $\mathcal{G} \subset \mathcal{H}$ denotes the set of spatial locations, and $T$ represents the sampling times. The goal of the dynamical sampling problem is to derive necessary and/or sufficient conditions that involve the driving operator $A$, the spatial  set $\mathcal{G}$, and the sampling time set $T$, ensuring that the sampling set $S(f)$ sufficiently (and stably) reconstructs the initial signal $f$.

In this study, we focus on the scenario where $\mathcal{H}=\ell^2(\mathbb{Z})$, $\mathcal{G} \subseteq \{e_i: i\in \mathbb{Z}\}$ the standard basis for $\ell^2(\mathbb{Z})$, $T \subseteq\mathbb{N}_0 := \{0, 1, 2, \cdots\}$, and $A$ is a convolution operator on $\ell^2(\mathbb{Z})$ characterized by the kernel $a = \{a(k)\}_{k \in \mathbb{Z}} \in \ell^1(\mathbb{Z})$. This setting leads us to formulate the following question:
\begin{question}
\label{question:orignal}
Under what conditions on the spatial  set $\mathcal{G}$ and the driving operator $A$, represented by the kernel $a$, should the sampling set
\begin{equation}
\label{sampset}
S(f) = \{\langle A^{s}f, g\rangle : g \in \mathcal{G}, s \in T\}
\end{equation}
be sufficient for one to (stably) recover the initial state $f$?
\end{question}

Due to the relationship between $A$ and its adjoint $A^*$, Question~\ref{question:orignal} is equivalent to the question of under what condition(s) on $\mathcal{G}$ the set $\left\{(A^{*})^sg:g\in\mathcal{G}, s\in \mathbb T\right\}$ is complete or forms a frame for $\ell^2(\mathbb{Z})$. To simplify notation, we focus on answering the following question throughout this work.
\begin{question} \label{question}
Under what condition(s) on $\mathcal{G}\subseteq\{e_i: i\in \mathbb{Z}\}$ and  the driving convolution operator $A$,  does the set 
\begin{equation}\label{eqn:sensing_set}
   \Omega=\left\{A^{s}g:g\in\mathcal{G}, s\in T \right\}
\end{equation}
form a complete set or a frame for $\ell^2(\mathbb{Z})$?
\end{question}

\subsection{Review on Finite Dimensional Case}

The finite dimensional version of  Question~\ref{question} was studied in \cite{aldroubi2013dynamical, aceska2015multidimensional}, where the sampling set is a sub-lattice of \(\mathbb{Z}^d\) and their cosets. The major tool used in these studies is the Poisson summation formula, which transforms the sampling problem into a linear system. Analysis of the null space of the corresponding system allows for the construction of stable sampling sets. This approach was fully developed in \cite{tang2017universal}, where universal sampling sets for a class of convolution operators were determined based on their eigenvalues and symmetric properties.

The dynamical sampling problem in the infinite dimensional case is more challenging \cite{aldroubi2015exact, aldroubi2017dynamical}.  Although some results exist when the driving operators are self-adjoint (or more generally,  normal operators), convolution operators deserve  special attention  since more refined results are possible for them. 
 
\subsection{Contributions and organizations} 

The objective of this study is to analyze the interplay between the sampling set $\mathcal{G}$ and the driving operator $A$ to determine whether the set $\Omega$ defined in \eqref{eqn:sensing_set} can form a complete or frame set for $\ell^2(\mathbb{Z})$. Our first result, summarized in Theorem 1, demonstrates that $\mathcal{G}$ must be infinite to ensure $\Omega$ forms a frame for $\ell^2(\mathbb{Z})$. This finding parallels the case where $A$ is a bounded self-adjoint operator, as studied in \cite{aldroubi2019frames}. Further details are provided in \Cref{sec:finitecard}.

We then examine a sampling set $\mathcal{G}$ of a sub-lattice form, specifically $\mathcal{G} = \{e_k : k \in \Lambda\}$ with $\Lambda = \{mj + c : j \in \mathbb{Z}, c = 0, 1, \ldots, L-1\}$ for some $m \geq 2$ and $1 \leq L \leq m$, referring to non-uniform periodic sampling in the literature. In this setup, we provide a characterization of the relationship among $L$, $m$, and the kernel $a$ necessary for $\Omega$ to be complete in $\ell^2(\mathbb{Z})$. Our result builds upon findings in the finite-dimensional case (see \cite{aceska2015multidimensional, tang2017universal}). Specifically, we demonstrate the equivalence of completeness and the frame property of $\Omega$ in $\ell^2(\mathbb{Z})$ when the kernel $a \in \ell^1(\mathbb{Z})$ and $L=1$, which is non-trivial in the infinite-dimensional case. More details can be found in \Cref{sec:subsampling}.

Finally, we present a result characterizing the sampling density of $\Lambda$ required for stable sampling and reveal how it depends on the convolution kernel $a$. We show that the maximal gap between spatial samples cannot be arbitrarily large. This investigation draws inspiration from \cite{aldroubi2021sampling}, which studied continuous-time heat diffusion in a bandlimited space, as described in \eqref{eqn:general_DS}. For more detailed results, refer to \Cref{sec:density}.

\section{Preliminary and Notation} \label{sec:notation}
Throughout this paper, the symbol $[N]$ denotes the set of integers $\{0, 1, \dots, N-1\}$. We employ the notation $|\cdot|$ in multiple ways: it can indicate the absolute value of a scalar,  the cardinality of a set, or the length of an interval, depending on the context. This work focuses on  the Hilbert space $\ell^2(\mathbb{Z})$, equipped with the inner product defined as:
$\langle f,g\rangle  := \sum_{k\in\mathbb{Z}} f(k)\overline{g(k)}$. The convolution operator $A: \ell^2(\mathbb{Z}) \to \ell^2(\mathbb{Z})$, which employs a kernel $a = \{a(k)\}_{k \in \mathbb{Z}} \in \ell^1(\mathbb{Z})$, is defined by:
\begin{equation} \label{def:convop}
A(f) (j) := (a \ast f) (j) := \sum_{k\in\mathbb{Z}}a(k)f(j-k), \textnormal{for}  f \in \ell^2(\mathbb{Z})~\textnormal{and}~j\in\mathbb{Z}.
\end{equation}
 Successive applications of $A$ produce a series evolution of $f$, expressed as:
\begin{equation} \label{as}
   A^{s}f = (\underbrace{a\ast\cdots\ast a}_{\textnormal{$s$ times}})\ast f =: a^{(s)}\ast f,
\end{equation}
where $s \in \mathbb{N}_0 := \{0, 1, 2, \cdots\}$. 
To assist the reader, we will introduce some fundamental concepts below that are pivotal to our analysis. The Fourier transform  plays a central role in our analysis, particularly the one on $\ell^2(\mathbb{Z})$. 

\begin{definition}[Fourier transformation]
   %  For a function $f \in L^2(\mathbb{R})$, the Fourier transform $\mathcal{F}$ evaluated at $\xi$ is defined as follows:
% \begin{equation} \label{FTR}
% \mathcal{F}(f)(\xi) := \int_{\mathbb{R}} f(x)e^{-i2\pi x\xi} dx \in L^2(\mathbb{R}).
% \end{equation}
%   The corresponding inverse Fourier transform $\mathcal{F}^{-1}$ for $g\in L^2(\mathbb{R})$   is defined as:
% \begin{equation*}
%    \mathcal{F}^{-1}(g)(x)=\int_{\mathbb{R}}g(\xi)e^{i2\pi x\xi}\,d\xi \in L^2(\mathbb{R}).
% \end{equation*}
% For example, 
% \begin{equation} \label{sinc}
%      \sinc(x):=\frac{\sin(\pi x)}{\pi x} =
%      \int_{-1/2}^{1/2}e^{i2\pi\xi x}\,d\xi=\mathcal{F}^{-1}(\chi_{[-1/2,1/2)})(x).
% \end{equation}
The Fourier transform on $\ell^2(\mathbb{Z})$, denoted by $\mathcal{F}_{\mathbb{Z}}$, is defined as:
\begin{equation} \label{FTZ}
\mathcal{F}_{\mathbb{Z}}({f})(\omega) = \hat{f}(\omega) := \sum_{k \in \mathbb{Z}} f(k) e^{-i2\pi k \omega} \in L^2(\mathbb{T}).
\end{equation}
For instance, if $a \in \ell^1(\mathbb{Z})$, then $\hat{a}$ belongs to $C(\mathbb{T})$ and is also a subset of $L^2(\mathbb{T})$. Its inverse Fourier transform $\mathcal{F}_{\mathbb{Z}}^{-1}$, operating on $L^2(\mathbb{T})$, is given by:
\begin{equation*}
\mathcal{F}_{\mathbb{Z}}^{-1}(g)(k) = \check{g}(k) := \int_{\mathbb{T}} g(\omega) e^{i2\pi k \omega}  d\omega \in \ell^2(\mathbb{Z}).
\end{equation*}
\end{definition}
\begin{example}\label{example:hatg}
   Let $g \in \ell^2(\mathbb{Z})$ be defined by $$g(n)=\begin{cases}
    1,&n=0\\
    \frac{2\sin(n\pi/2)}{n\pi},&\textnormal{otherwise}
\end{cases},$$ then we have
$$\hat{g}(\omega)=\begin{cases}
    0,&\omega\in[-1/2,-1/4)\cup(1/4,1/2]\\
    1/2,&\omega=\pm 1/4\\
    1,&\omega\in(-1/4,1/4)
\end{cases}.$$
\end{example}
The cyclic group $\mathbb{T}$ is identified with either $[0,1)$ or $[-1/2,1/2)$, as context dictates. 
The inner product on $L^2(\mathbb{T})$ is defined as:
\begin{equation*}
\langle f, g \rangle := \int_{\mathbb{T}} f(\omega) \overline{g(\omega)} d\omega, \textnormal{ for }f,g\in L^2(\mathbb{T}).
\end{equation*}

%Our discussion will revolve around three principal separable complex Hilbert spaces: $L^2(\mathbb{R})$, $\ell^2(\mathbb{Z})$, and $L^2(\mathbb{T})$, each endowed with its respective norm: $|\cdot|{L^2(\mathbb{R})}$, $|\cdot|{\ell^2(\mathbb{Z})}$, and $|\cdot|{L^2(\mathbb{T})}$. By Plancherel's theorem and our setup definitions, it follows that $|f|{\ell^2(\mathbb{Z})} = |\hat{f}|{L^2(\mathbb{T})}$ and $|f|{L^2(\mathbb{R})} = |\hat{f}|_{L^2(\mathbb{R})}$.

% \subsection{Banach densities for a set of integers} \label{sec:Banach}

\begin{definition}[Banach density]\label{def:Banach_density}
Let $\Lambda\subset\mathbb{Z}$.  
\begin{align}
    \label{upBdens} \bar{d}(\Lambda) &:=
    \limsup_{l\to\infty}\sup_{K}\frac{|\{\lambda\in [K-l,K+l]\cap\Lambda\}|}{2l}\\
    \label{lowBdens}\text{ and }~ \underline{d}(\Lambda) &:=
    \liminf_{l\to\infty}\inf_{K}\frac{|\{\lambda\in [K-l,K+l]\cap\Lambda\}|}{2l}
\end{align}
are called the upper and lower Banach densities of $\Lambda$, respectively. Moreover, if $\bar{d}(\Lambda)=\underline{d}(\Lambda)$, we say that $\Lambda$ has a Banach density $d(\Lambda):=\bar{d}(\Lambda)=\underline{d}(\Lambda)$. 
\end{definition}
For example, if $\Lambda=m\mathbb{Z}$, then $d(\Lambda)=1/m$, and if $|\Lambda|<\infty$, then $d(\Lambda)=0$.
\begin{definition}[Frame set]
A frame $\{\phi_n\}_{n\in\mathbb{Z}}$ in a separable Hilbert space $\mathcal{H}$ is a sequence of vectors satisfying the frame condition: 
\begin{equation}\label{eqn:frame}
    c\|f\|^2\leq \sum_{n\in\mathbb{Z}}|\langle f,\phi_n\rangle\|^2\leq C\|f\|^2, \textnormal{ for all }f\in\mathcal{H},
\end{equation}
for some positive constants $c$, $C>0$. The constants $c$ and $C$ are called the lower and upper frame bounds respectively, or the frame constants.  
\end{definition}
\subsection{Preliminaries for operators}
The complex vector space of all bounded linear operators on a generic separable complex Hilbert space $\mathcal{H}$ is denoted by $\mathcal{L}(\mathcal{H})$, and the set of all bounded normal linear operators on $\mathcal{H}$ is denoted by $\mathcal{N}(\mathcal{H})$, defined as: 
\begin{equation*}
    \mathcal{N}(\mathcal{H}):=\{B\in\mathcal{L}(\mathcal{H}): BB^{*}=B^{*}B\}. 
\end{equation*}
An example of such a bounded normal linear operator is $A:\ell^2(\mathbb{Z})\to\ell^2(\mathbb{Z})$ in \eqref{def:convop}, and its adjoint $A^{*}:\ell^2(\mathbb{Z})\to\ell^2(\mathbb{Z})$, which is another convolution operator with kernel $\bar{a}\in \ell^1(\mathbb{Z})$.
\begin{definition} {\it (Operator norm)}
   Let $\mathcal{H}$  be a Hilbert   space  equipped with an inner product $\langle\cdot,\cdot\rangle$  and the corresponding norm $\|\cdot\|$. The operator norm for  $B\in\mathcal{L}(\mathcal{H})$ is defined as
\begin{equation*} 
    \|B\|_{op} := \sup_{0<\|f\|\leq 1; f\in\mathcal{H}}\|B(f)\|.
\end{equation*} 
\end{definition}
If  $A$ satisfies \eqref{def:convop}, we also have \cite{davidson1996c}
\begin{equation*} 
    \|A\|_{op}=\sup\{|\lambda|: \lambda\in\spec(A)\}=\|A^{*}\|_{op},
\end{equation*}
where $\spec(A):=\{\lambda\in\mathbb{C}:\lambda\cdot\textnormal{id}-A \textnormal{ is not invertible}\}$ denotes the spectrum  of  $A$ with $\textnormal{id}$ being the identity operator.  
Similarly, if $B: L^2(\mathbb{R})\to L^2(\mathbb{R})$ is a convolution operator with kernel $b\in L^1(\mathbb{R})$, then we interpret $B^{t}: L^2(\mathbb{R})\to L^2(\mathbb{R})$, $t\geq 0$, as
\begin{equation*} 
    \mathcal{F}(B^{t}f):=(\mathcal{F}b)^{t}(\mathcal{F}f)\in L^2(\mathbb{R}).
\end{equation*}
To ensure that $\hat{a}^{t}$ and  $(\mathcal{F}b)^{t}$ are well-defined, we assume additionally that 
\begin{equation*}
    \range(\hat{a}), \range(\mathcal{F}b) \subset \mathbb{C}-\{x+i\cdot 0: x<0\}=:R,
\end{equation*}
since if $\lambda\in R$ and $t\geq 0$, then $\lambda^{t} = \exp(t \cdot\mathrm{Log}(\lambda))$ where $\mathrm{Log}: R\to\mathbb{C}$ is the principal logarithm function \cite{gamelin2003complex}. 

%\subsection{Functions on high dimensional tori} \label{sec:hdt} 
\begin{definition}[Norm of function on high dimensional tori]\label{sec:hdt}
    Let $I\subset\mathbb{T}:=[0,1)$ be a set of positive measure. Let ${\bf z}=(z_1,\cdots,z_{m})\in (L^2(I))^{m}$, i.e., $z_{j}\in L^2(I)$ for $j=1,\cdots,m$.  The norm of ${\bf z}$ is defined as 
\begin{equation} \label{def:2norm}
    \|{\bf z}\|^2:=\sum_{j=1}^{m}\|z_{j}\|_{L^2(\mathbb{T})}^2.
\end{equation}

\end{definition}

We now present two existing results that will be useful in our analysis. The first is a proposition that imposes a restriction on the spectral topology of a normal operator.

\begin{proposition}[\cite{aldroubi2017dynamical1}] \label{prop:spectrum}  If $B\in\mathcal{N}(\mathcal{H})$  with $\textnormal{dim}(\mathcal{H})=\infty$, $\mathcal{G}\subset\mathcal{H}$ with $|\mathcal{G}|<\infty$, and the system $\{B^{s}g: g\in\mathcal{G}, s\in\mathbb{N}_0\}$ satisfies the lower frame bound \footnote{The proof of this proposition, \cite[Corollary 1]{aldroubi2017dynamical1}, only uses the the lower frame bound property.}, then $B$ is unitarily equivalent to an operator
\begin{equation*}
    \sum_{j}\lambda_{j}P_{j} \quad\text{ where }\quad |\lambda_{j}|<1,
\end{equation*}
and $P_{j}$ are orthogonal projections such that $\dim(P_{j})\leq |\mathcal{G}|$.
\end{proposition}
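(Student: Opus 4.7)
The plan is to invoke the spectral theorem for bounded normal operators and squeeze structural information out of the lower frame bound. Since $B \in \mathcal{N}(\mathcal{H})$ with $\mathcal{H}$ separable, $B$ is unitarily equivalent to multiplication by the independent variable on a direct sum of weighted $L^2$ spaces, governed by a scalar spectral measure $\mu$ and a local multiplicity function $N(z)$. The lower frame bound already forces completeness of $\{B^s g : g \in \mathcal{G}, s \in \mathbb{N}_0\}$: any nonzero $f$ orthogonal to every iterate would make the frame sum vanish. In particular, the cyclic subspaces generated by the finitely many $g \in \mathcal{G}$ span $\mathcal{H}$, which pins $N(z) \leq |\mathcal{G}|$ almost everywhere with respect to $\mu$.

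Next, I would show that $\mu$ is purely atomic. Assume for contradiction that $\mu$ has a nontrivial continuous component, concentrated near some $\lambda_0 \in \spec(B)$. In the spectral representation, choose shrinking neighborhoods $U_n$ of $\lambda_0$ whose associated spectral subspaces have dimension exceeding $|\mathcal{G}|$, which is possible precisely because $\mu$ is continuous there; inside each $U_n$ select a unit vector $f_n$ orthogonal to the spectral restrictions of every $g \in \mathcal{G}$. Each inner product $\langle f_n, B^s g\rangle = \int z^s \overline{g(z)} f_n(z)\, d\mu(z)$ is then an oscillatory integral whose amplitude is forced down by the combination of shrinking support and the orthogonality $\langle f_n, g \rangle = 0$; a Riemann--Lebesgue style argument yields $\sum_{g,s} |\langle f_n, B^s g\rangle|^2 \to 0$, contradicting the lower frame bound. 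Therefore $B$ has pure point spectrum, so the spectral theorem gives $B = \sum_j \lambda_j P_j$ with mutually orthogonal projections $P_j$ onto the eigenspaces.

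Finally I would extract the remaining constraints by testing the frame inequality on eigenvectors. If $\dim P_j > |\mathcal{G}|$, pick a nonzero $v \in P_j \mathcal{H}$ with $v \perp g$ for every $g \in \mathcal{G}$; normality of $B$ gives $B^{*}v = \bar{\lambda}_j v$, whence $\langle v, B^s g\rangle = \bar{\lambda}_j^s \langle v, g\rangle = 0$ for every $s$, forcing the frame sum to zero --- a contradiction. For $|\lambda_j| < 1$, any nonzero $v \in P_j \mathcal{H}$ satisfies
\[
\sum_{g \in \mathcal{G}} \sum_{s \in \mathbb{N}_0} |\langle v, B^s g\rangle|^2 \;=\; \Bigl(\sum_{s \geq 0} |\lambda_j|^{2s}\Bigr)\sum_{g \in \mathcal{G}} |\langle v, g\rangle|^2 ,
\]
and if $|\lambda_j| \geq 1$ the geometric factor diverges, so compatibility with a frame-type inequality (whose left-hand side is finite) demands $\langle v, g\rangle = 0$ for every $g$; but then the whole sum vanishes, violating the lower frame bound once again. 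The main obstacle is the second step: producing the unit vectors $f_n$ in shrinking spectral windows that are near-orthogonal to $\mathcal{G}$, and quantifying the oscillation estimate uniformly in $s$. The modulus bound in the last step is also delicate, relying on the standard reading that a system satisfying the lower frame bound has finite analysis sums; with both in hand, the three pieces assemble into the claimed unitary equivalence $B \cong \sum_j \lambda_j P_j$ with $\dim(P_j \mathcal{H}) \leq |\mathcal{G}|$ and $|\lambda_j| < 1$.
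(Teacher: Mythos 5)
First, a point of reference: the paper does not prove Proposition~\ref{prop:spectrum} at all --- it is imported verbatim from \cite[Corollary 1]{aldroubi2017dynamical1} --- so your attempt can only be compared against the argument in that reference. Your skeleton (spectral theorem with a multiplicity function, completeness forces multiplicity $\leq|\mathcal{G}|$, rule out continuous spectrum, then test the inequality on eigenvectors) is the same as the one used there, and two of your steps are sound: completeness of the $B$-orbits makes $\mathcal{G}$ a $*$-generating set, which bounds the multiplicity by $|\mathcal{G}|$; and if $\dim P_j>|\mathcal{G}|$ one can pick $v\in P_j\mathcal{H}$ orthogonal to the $|\mathcal{G}|$ vectors $P_jg$, and since $B^{*}v=\bar\lambda_j v$ every $\langle v,B^sg\rangle$ vanishes, contradicting the lower bound.

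The two remaining steps have genuine gaps. \emph{(i) Pure atomicity.} The sum runs over all $s\in\mathbb{N}_0$. Splitting $\bar z^{\,s}=\bar\lambda_0^{\,s}+(\bar z^{\,s}-\bar\lambda_0^{\,s})$, the orthogonality $f_n\perp E(U_n)g$ kills the main term, but the error is only $O\bigl(s\,\rho_n^{\,s-1}\,\mathrm{diam}(U_n)\bigr)$ with $\rho_n=\sup_{z\in U_n}|z|$, and $\sum_s s^2\rho_n^{2s-2}$ diverges whenever $\rho_n\geq 1$. Nothing in the hypotheses places the continuous spectrum strictly inside the open unit disk --- that localization is part of what is being proved --- so your Riemann--Lebesgue estimate cannot be closed in exactly the hard case, namely continuous spectral mass on or outside the unit circle (bilateral-shift-type components). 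The published proof proceeds differently: it first pins $|z|<1$ a.e., then for $f$ supported on a set $E$ uses $\sum_s|\langle f,z^s b\rangle|^2\leq\|f\|^2\int_E|b(z)|^2(1-|z|^2)^{-1}\,d\mu(z)$ and non-atomicity to drive the right-hand side below the lower frame constant. \emph{(ii) $|\lambda_j|<1$.} You invoke ``the standard reading that a system satisfying the lower frame bound has finite analysis sums.'' That is false: finiteness of $\sum_{g,s}|\langle f,B^sg\rangle|^2$ is the Bessel (upper) bound, which is not assumed here. With only the lower inequality a divergent sum satisfies it vacuously; for instance, a diagonal $B$ with one eigenvalue $2$ whose eigenvector overlaps $g$, glued to a genuine frame system on the orthogonal complement, satisfies the lower frame bound while having an eigenvalue of modulus greater than $1$. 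So this step cannot be repaired without importing the upper bound --- which also suggests the footnote's claim that only the lower bound is needed should be treated with caution for precisely this part of the conclusion.
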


The second result is a lemma on the operator norm on a high-dimensional torus (see \Cref{sec:hdt}).

\begin{lemma}[\cite{halmos2017introduction}] \label{lem:halmos} 
Suppose $\mathcal{A}: (L^2(\mathbb{T}))^{l}\to (L^2(\mathbb{T}))^{n}$ is defined by $(\mathcal{A}x)(\omega)=M(\omega)x(\omega)$, where the map $\omega\mapsto M(\omega)$ from $\mathbb{T}$ to the space of $n\times l$ matrices $\mathcal{M}^{nl}$ is measurable. Then \begin{equation*}
    \|\mathcal{A}\|_{op}=\textnormal{ess}\sup_{\omega\in\mathbb{T}}\|M(\omega)\|_{op}.
\end{equation*}
\end{lemma}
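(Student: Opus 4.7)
The plan is to prove the two-sided inequality
$\|\mathcal{A}\|_{op}=\esssup_{\omega\in\mathbb{T}}\|M(\omega)\|_{op}$
by establishing the upper and lower bounds separately, with the lower bound requiring a small measurable-selection argument. Throughout, write $M:=\esssup_{\omega\in\mathbb{T}}\|M(\omega)\|_{op}$, and note that because $n,l<\infty$ one may fix a countable dense subset $D$ of the unit sphere of $\mathbb{C}^{l}$ and write $\|M(\omega)\|_{op}=\sup_{v\in D}\|M(\omega)v\|_{\mathbb{C}^{n}}$; this realizes $\omega\mapsto\|M(\omega)\|_{op}$ as a supremum of countably many measurable functions, hence measurable, so that $M$ is well defined.

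For the upper bound, I would take any $x\in(L^2(\mathbb{T}))^{l}$ and compute directly, using Definition~\ref{sec:hdt} and the pointwise inequality $\|M(\omega)x(\omega)\|\leq \|M(\omega)\|_{op}\|x(\omega)\|$:
\begin{equation*}
\|\mathcal{A}x\|^{2}=\int_{\mathbb{T}}\|M(\omega)x(\omega)\|^{2}d\omega\leq \int_{\mathbb{T}}\|M(\omega)\|_{op}^{2}\|x(\omega)\|^{2}d\omega\leq M^{2}\|x\|^{2}.
\end{equation*}
Taking the supremum over $\|x\|\leq 1$ gives $\|\mathcal{A}\|_{op}\leq M$.

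For the lower bound I would argue by contradiction-free approximation. Fix $\varepsilon\in(0,M)$; the set $E_{\varepsilon}:=\{\omega:\|M(\omega)\|_{op}>M-\varepsilon\}$ has positive measure by definition of essential supremum. Using the countable dense $D$, write $E_{\varepsilon}=\bigcup_{v\in D}\{\omega\in E_{\varepsilon}: \|M(\omega)v\|>M-2\varepsilon\}$; each set in the union is measurable, so at least one, call it $E_{\varepsilon,v_{0}}$, has positive measure. Choose a measurable subset $F\subseteq E_{\varepsilon,v_{0}}$ of finite positive measure and define $x(\omega):=|F|^{-1/2}\mathbf{1}_{F}(\omega)\,v_{0}$, viewed as an element of $(L^{2}(\mathbb{T}))^{l}$. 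Then $\|x\|=1$ and
\begin{equation*}
\|\mathcal{A}x\|^{2}=|F|^{-1}\int_{F}\|M(\omega)v_{0}\|^{2}d\omega\geq (M-2\varepsilon)^{2},
\end{equation*}
giving $\|\mathcal{A}\|_{op}\geq M-2\varepsilon$. Letting $\varepsilon\to 0$ yields $\|\mathcal{A}\|_{op}\geq M$, and combined with the upper bound we obtain the claimed equality.

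The only real obstacle is the measurable-selection step, i.e.\ producing a measurable family of nearly-optimal unit vectors, which I sidestep by exploiting finite dimensionality of $\mathbb{C}^{l}$ through a countable dense set and then using a single constant vector $v_{0}$ on a well-chosen set of positive measure. If one preferred, the same step can be carried out via a measurable singular-value decomposition of $M(\omega)$, but the countable-dense-subset argument is shorter and avoids issues of ordering eigenvalues measurably.
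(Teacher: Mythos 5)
The paper does not prove this lemma at all: it is imported verbatim from Halmos's book, so there is no in-paper argument to compare against. Your proof is the standard one and is correct. The upper bound via the pointwise estimate $\|M(\omega)x(\omega)\|\leq\|M(\omega)\|_{op}\|x(\omega)\|$ together with Tonelli (to pass between $\sum_j\|\cdot\|_{L^2}^2$ in Definition~\ref{sec:hdt} and $\int_{\mathbb{T}}\|\cdot\|_{\mathbb{C}^n}^2\,d\omega$) is fine, and your lower bound correctly handles the only delicate point, namely measurability: realizing $\|M(\omega)\|_{op}$ as a countable supremum over a dense subset $D$ of the unit sphere both shows the essential supremum is well defined and lets you replace a measurable selection of near-maximizing vectors by a single constant vector $v_0$ on a set of positive measure, which is exactly the right shortcut in finite dimensions. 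Two cosmetic remarks: the passage from $\|M(\omega)\|_{op}>M-\varepsilon$ to the existence of $v\in D$ with $\|M(\omega)v\|>M-2\varepsilon$ needs only that $\sup_{v\in D}\|M(\omega)v\|=\|M(\omega)\|_{op}$, not a Lipschitz estimate, so no boundedness of $\|M(\omega)\|_{op}$ on $E_\varepsilon$ is required; and your argument as written presumes $M<\infty$, though the case $M=\infty$ is handled by the same construction with $M-\varepsilon$ replaced by an arbitrary constant $K$ (and in the paper's application $\hat a\in C(\mathbb{T})$ makes this moot). In the paper's use of the lemma the entries of $M(\omega)$ are continuous, so the essential supremum is in fact attained as a genuine supremum, as exploited in \eqref{invnorm}.
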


\section{Cardinality characterization on $\mathcal{G}$ for frame conditions} \label{sec:finitecard}
In this section, our primary aim is to conduct a rigorous analysis of the cardinality of the set $\mathcal{G}$ to verify the frame conditions stated in \eqref{eqn:sensing_set}. Our specific objective is to establish that the set $\{A^sg:g\in\mathcal{G},s\in \mathbb{N}_0\}$ cannot be  a frame system for $\ell^2(\mathbb{Z})$ when the cardinality $|\mathcal{G}|$ is finite, irrespective of the properties of the convolution kernel $a$.
\begin{theorem} \label{thm:finite}Let $A$ be defined as in \eqref{def:convop} and consider $\mathcal{G} \subset \ell^2(\mathbb{Z})$. If  the set $\{A^s g : g \in \mathcal{G}, s \in \mathbb{N}_0\}$ satisfies the lower frame bound (see \eqref{eqn:frame}):
\begin{equation} \label{prop1a}
\sum_{g \in \mathcal{G}}\sum_{s \in \mathbb{N}_0} |\langle A^{s}g, f\rangle |^2 \geq c\|f\|^2_{\ell^2(\mathbb{Z})}
\end{equation}
with a positive constant $c$, then the cardinality of $\mathcal{G}$ must be infinite.
\end{theorem}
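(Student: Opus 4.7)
The plan is to argue by contradiction. Suppose $|\mathcal{G}| < \infty$ while the lower frame bound in~\eqref{prop1a} holds. First I would observe that the convolution operator $A$ is normal, since its adjoint $A^{*}$ is convolution with $\bar{a} \in \ell^{1}(\mathbb{Z})$ and convolution operators on $\ell^{2}(\mathbb{Z})$ commute; moreover $\ell^{2}(\mathbb{Z})$ is infinite-dimensional. Thus the hypotheses of Proposition~\ref{prop:spectrum} are met with $B = A$, and I may conclude that there is a unitary $U$ on $\ell^{2}(\mathbb{Z})$ with $UAU^{*} = \sum_{j}\lambda_{j}P_{j}$, where $|\lambda_{j}|<1$ and each orthogonal projection $P_{j}$ satisfies $\dim(P_{j}) \leq |\mathcal{G}| < \infty$. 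In particular, $A$ will have pure point spectrum, with every distinct eigenvalue of multiplicity bounded by $|\mathcal{G}|$.

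The next step is to move to the Fourier side, where $A$ is unitarily equivalent, via $\mathcal{F}_{\mathbb{Z}}$, to the multiplication operator $M_{\hat{a}}$ on $L^{2}(\mathbb{T})$. Because $a \in \ell^{1}(\mathbb{Z})$, the symbol $\hat{a}$ lies in $C(\mathbb{T})$. A scalar $\lambda$ is an eigenvalue of $M_{\hat{a}}$ precisely when the preimage $E_{\lambda} := \hat{a}^{-1}(\{\lambda\})$ has positive Lebesgue measure, and in that case the $\lambda$-eigenspace of $M_{\hat{a}}$ is exactly $L^{2}(E_{\lambda})$. The crucial observation I would invoke is that $L^{2}(E_{\lambda})$ is automatically infinite-dimensional: any positive-measure subset of $\mathbb{T}$ can be split repeatedly into nested positive-measure pieces, yielding an orthogonal sequence of characteristic functions in $L^{2}(E_{\lambda})$.

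To close the argument, I would note that pure point spectrum on the infinite-dimensional separable space $L^{2}(\mathbb{T})$ forces the existence of at least one eigenvalue of $M_{\hat{a}}$ (otherwise the sum $\sum_{j}\lambda_{j}P_{j}$ would act on the zero space). But by the previous step, any such eigenvalue has an infinite-dimensional eigenspace, which contradicts the uniform bound $\dim(P_{j}) \leq |\mathcal{G}| < \infty$ furnished by Proposition~\ref{prop:spectrum}. This contradiction forces $|\mathcal{G}|$ to be infinite. The main subtlety, in my view, is the passage between the abstract spectral decomposition of $A$ produced by Proposition~\ref{prop:spectrum} and the multiplication-operator model on $L^{2}(\mathbb{T})$; once one recognizes that continuity of $\hat{a}$ together with the non-atomic Lebesgue measure on $\mathbb{T}$ forces every eigenspace of $M_{\hat{a}}$ to be infinite-dimensional, the conclusion is immediate.
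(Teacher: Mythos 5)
Your argument is correct and follows essentially the same route as the paper: invoke Proposition~\ref{prop:spectrum} to obtain the decomposition $\sum_j \lambda_j P_j$ with $\dim(P_j)\leq|\mathcal{G}|<\infty$, pass to the multiplication operator $M_{\hat{a}}$ on $L^2(\mathbb{T})$, note that some level set of $\hat{a}$ must have positive Lebesgue measure, and conclude that the corresponding eigenspace $L^2(E_\lambda)$ is infinite-dimensional, a contradiction. The only cosmetic difference is that you phrase the key step in terms of eigenvalues and eigenspaces of $M_{\hat{a}}$, whereas the paper writes $\hat{a}=\sum_j\lambda_j\chi_{E_j}$ explicitly; these are the same observation.
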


\begin{proof}
% Since $A\in\mathcal{N}(\ell^2(\mathbb{Z}))$, it automatically inherits all the framability results for those $B\in\mathcal{N}(\ell^2(\mathbb{Z}))$ in  \Cref{prop:spectrum} with $\mathcal{H}=\ell^2(\mathbb{Z})$ and $B=A$. 
  
Since the Fourier transformation on $\ell^2(\mathbb{Z})$ is a unitary transformation from $\ell^2(\mathbb{Z})$ to $L^2(\mathbb{T})$ and $\mathcal{F}_{\mathbb{Z}}(A(f))=\hat{a}\hat{f}=:M_{\hat{a}}\hat{f}$, one has
\begin{equation} \label{specfac}
    \spec(A)=\spec(M_{\hat{a}})=\range(\hat{a}).
\end{equation}
Suppose that $|\mathcal{G}|<\infty$. Since $A\in\mathcal{N}(\ell^2(\mathbb{Z}))$ and $\{A^sg:g\in\mathcal{G},s\in\mathbb{N}_0\}$ satisfies the lower frame bound,   according to  \Cref{prop:spectrum} by setting $\mathcal{H}=\ell^2(\mathbb{Z})$ and $B=A$, $\range(\hat{a})$ must be a countable set,  and
\begin{equation} \label{hatb}
    \hat{a}(\xi)=\sum_{j}\lambda_{j}\chi_{E_{j}}(\xi),
\end{equation}
where $|\lambda_{j}|< 1$ and $E_{j}$'s are a.e. disjoint Borel subsets of $\mathbb{T}$. 
From \eqref{hatb}, one can say that  $A$ is unitarily equivalent to a diagonal operator $$\mathcal{F}_{\mathbb{Z}}A\mathcal{F}_{\mathbb{Z}}^{-1} = M_{\hat{a}} = \sum_{j}\lambda_{j}P_{j}$$  through the Fourier transform with
 $P_{j}f:=f\chi_{E_{j}}$ for $f\in L^2(\mathbb{T})$. 
 Again by \Cref{prop:spectrum}, we also have $\dim(P_{j})\leq |\mathcal{G}|<\infty$. Since the measure of $\mathbb{T}$ equals $1$,   thus some of $E_j$ has positive measure. Hence we have $\dim(L^2(E_j))=\infty$ for some $j$, which contradict with that $\dim(P_j)=\dim(L^2(E_j))<\infty$ for all $j$. Therefore, we have $|\mathcal{G}|=\infty$.
\end{proof}

\begin{remark}
The lower frame bound property remains unobtainable even when we take the iterative power in \eqref{prop1a} continuously. Specifically, if
\begin{equation*}
    \sum_{g\in\mathcal{G}}\int_0^{T} |\langle A^{t}g, f\rangle_{\mathbb{Z}}|^2\,dt\geq c\|f\|_{\ell^2(\mathbb{Z})}^2  \text{ for all } f\in\ell^2(\mathbb{Z}) \text{ and }T<\infty
\end{equation*}
holds for some $c>0$, then $\sum_{g\in\mathcal{G}}\|g\|^2=\infty$. Here\footnote{For a more precise definition of a fractional power of an operator, please refer to \cite{aldroubi2019frames}.},   $\mathcal{F}_{\mathbb{Z}}(A^{t}f):=\hat{a}^{t}\hat{f}$ for $A$. in \eqref{def:convop}. A proof of this fact is a simple adaptation of that of \cite[Lemma 5.2]{aldroubi2019frames}.  
\end{remark}
Given the results above, the following corollary can be readily derived.
\begin{corollary}
    Let $A$ be defined as in \eqref{def:convop} and consider $\mathcal{G} \subset \ell^2(\mathbb{Z})$. The set $\{A^s g : g \in \mathcal{G}, s \in \mathbb{N}_0\}$ is a frame set in $\ell^2(\mathbb{Z})$, then the cardinality of $\mathcal{G}$ must be infinite.
\end{corollary}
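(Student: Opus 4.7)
The plan is to observe that this corollary follows immediately from Theorem~\ref{thm:finite}; there is essentially nothing to prove beyond matching the hypotheses. The key point is that the frame condition in~\eqref{eqn:frame} explicitly requires a positive lower frame bound, so assuming that the system $\{A^s g : g \in \mathcal{G}, s \in \mathbb{N}_0\}$ is a frame is strictly stronger than assuming the single lower bound inequality~\eqref{prop1a} that drives Theorem~\ref{thm:finite}.

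Concretely, my proof would proceed in two short steps. First, I would unpack the assumption: if $\{A^s g : g \in \mathcal{G}, s \in \mathbb{N}_0\}$ is a frame for $\ell^2(\mathbb{Z})$, then by the defining inequality~\eqref{eqn:frame} applied to this family there exists a constant $c > 0$ such that
\[
\sum_{g \in \mathcal{G}}\sum_{s \in \mathbb{N}_0} |\langle A^{s}g, f\rangle |^2 \;\geq\; c\|f\|^2_{\ell^2(\mathbb{Z})}
\]
for every $f \in \ell^2(\mathbb{Z})$. This is precisely the hypothesis~\eqref{prop1a} of Theorem~\ref{thm:finite}. Second, I would invoke Theorem~\ref{thm:finite} directly to conclude that $|\mathcal{G}| = \infty$.

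There is no real obstacle: the upper frame bound from~\eqref{eqn:frame} is not used at all, and the deduction is a one-line application of the preceding theorem. The only mild bookkeeping point worth noting is that the frame definition~\eqref{eqn:frame} is stated with an $\mathbb{Z}$-indexing, while our system is indexed by $\mathcal{G} \times \mathbb{N}_0$; since frames may be indexed by any countable set and the inequality depends only on the unordered family of vectors, this reindexing is harmless and requires no further comment in the written proof.
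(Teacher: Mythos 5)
Your proposal is correct and matches the paper's intent exactly: the paper states the corollary is ``readily derived'' from Theorem~\ref{thm:finite}, and the only content is observing that a frame satisfies the lower frame inequality~\eqref{eqn:frame}, which is hypothesis~\eqref{prop1a}. Your two-step unpacking and the remark about reindexing over $\mathcal{G}\times\mathbb{N}_0$ are exactly the right (and only) things to say.
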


\section{Structural sampling:  sub-lattice sampling} \label{sec:subsampling}

In  \Cref{sec:finitecard},   \Cref{thm:finite} establishes that the frame system $\{A^sg:g\in\mathcal{G},s\in\mathbb{N}_0\}$ is only possible in $\ell^2(\mathbb{Z})$ if $|\mathcal{G}|=\infty$. In this section, our focus is on sub-lattice sampling, where the sampling set  $\Omega$ is of  the following form:
\begin{equation}\label{eqn: sub-lattice}
\Omega=\{A^se_{k}: k \in \Lambda\text{ and }s\in [N]\}, 
\end{equation}
with $\Lambda = \{mj+c: j\in\mathbb{Z},c=0,1,\cdots,L-1\}$, where $m$ and $L$ are some positive integers with $m\geq 2$, and the convolution operator $A$ is defined by the kernel $a=(a(k))_{k\in\mathbb{Z}}\in\ell^1(\mathbb{Z})$.
In the context of this sub-lattice sampling setting, we present the following key result that generalizes the finite dimensional result (see Theorem 2.12 in \cite{tang2017universal}).

\begin{theorem} \label{thm:sublatsamp}
Let $A$ be defined as in \eqref{def:convop} with $a\in\ell^1(\Z)$. If $\Omega$ in \eqref{eqn: sub-lattice} is complete in $\ell^2(\Z)$, then
\begin{equation} \label{thm1conc}
    L \geq \mathfrak{N}(t,\omega) :=\left|\left\{j: \hat{a}\left(\frac{\omega+j}m\right)=t\right\}\right|,
\end{equation}
for all $t\in \range(\hat{a})$, a.e. $\omega\in [0,1)$, and $NL\geq m$. 
\end{theorem}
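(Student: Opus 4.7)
My plan is to translate everything to the Fourier side via the unitary $\mathcal{F}_{\mathbb{Z}}$ and then reduce to a pointwise-in-frequency rank analysis after a standard folding. Completeness of $\Omega$ in $\ell^2(\mathbb{Z})$ is equivalent to the statement that any $f\in L^2(\mathbb{T})$ satisfying
$\int_0^1 f(\omega)\,\overline{\hat a(\omega)}^{\,s}\,e^{i2\pi(mj+c)\omega}\,d\omega=0$
for all $j\in\mathbb{Z}$, $c\in[L]$, $s\in[N]$ must vanish. I would then fold by setting $\omega=(u+k)/m$ with $u\in[0,1)$, $k\in[m]$, and use $e^{i2\pi j(u+k)}=e^{i2\pi ju}$ to pull the $j$-dependence out. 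Completeness of the exponential basis $\{e^{i2\pi ju}\}_{j\in\mathbb{Z}}$ in $L^2([0,1))$, together with the fact that $e^{i2\pi cu/m}$ never vanishes, reduces the orthogonality to the pointwise-in-$u$ linear system
\begin{equation}\label{plan-sys}
\sum_{k=0}^{m-1}F_k(u)\,\overline{t_k(u)}^{\,s}\,e^{i2\pi ck/m}=0,\qquad c\in[L],\ s\in[N],
\end{equation}
where $F_k(u):=f((u+k)/m)$ and $t_k(u):=\hat a((u+k)/m)$. The map $f\mapsto (F_0,\dots,F_{m-1})$ is (up to the scalar $1/\sqrt m$) an isometric isomorphism from $L^2(\mathbb{T})$ onto $L^2([0,1))^{m}$, so completeness of $\Omega$ is equivalent to: for a.e.\ $u$, the only vector $\mathbf{F}(u)\in\mathbb{C}^m$ solving \eqref{plan-sys} is zero.

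I would then view \eqref{plan-sys} as $M(u)\mathbf{F}(u)=0$, where $M(u)\in\mathbb{C}^{NL\times m}$ has entries $\overline{t_k(u)}^{\,s}\,e^{i2\pi ck/m}$ with rows indexed by $(c,s)\in[L]\times[N]$ and columns by $k\in[m]$. Completeness is equivalent to $\mathrm{rank}\,M(u)=m$ a.e., which immediately forces $NL\geq m$: if $NL<m$, then $M(u)$ has nontrivial kernel at every $u$, and a standard measurable selection produces a nonzero $(F_k)$, hence a nonzero $f\perp\Omega$. For the density bound I argue by contradiction: suppose there exist $t_0\in\range(\hat a)$ and a positive-measure set $E\subset[0,1)$ on which $\mathfrak N(t_0,u)>L$. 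I partition $E$ by the finitely many values of the label $J_u:=\{k\in[m]:\,t_k(u)=t_0\}$, pick a piece $E_{J_0}$ of positive measure with $|J_0|>L$, and look for $\mathbf{F}(u)$ supported on $J_0$. With the convention $0^0:=1$, \eqref{plan-sys} collapses to $\overline{t_0}^{\,s}\sum_{k\in J_0}v_ke^{i2\pi ck/m}=0$, whose nonredundant content is the partial Vandermonde system $\sum_{k\in J_0}v_ke^{i2\pi ck/m}=0$ for $c\in[L]$. Since $|J_0|>L$, this has a nonzero solution $v\in\mathbb{C}^{|J_0|}$; setting $F_k(u)=v_k\chi_{E_{J_0}}(u)$ (and $0$ otherwise) yields a nontrivial $f$ annihilating every element of $\Omega$, a contradiction.

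The main obstacle I anticipate is the measurability bookkeeping: justifying the equivalence between completeness of $\Omega$ and the a.e.-pointwise rank condition requires the measurable-isomorphism for the folding step and, for the $NL\geq m$ part, a measurable selection of a null vector of $M(u)$ (e.g., via measurable $\mathrm{QR}$). A secondary subtlety is the degenerate case $t_0=0$ in the density step, where $\overline{t_0}^{\,s}$ vanishes for $s\geq 1$ and only the $s=0$ row survives — but this still produces exactly the $L$ Vandermonde constraints on $v$, so the contradiction goes through unchanged.
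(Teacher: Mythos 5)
Your proposal is correct and follows essentially the same route as the paper: the folding $\omega=(u+k)/m$ is exactly the paper's Poisson-summation reduction to the pointwise $NL\times m$ system $\mathfrak{A}_m(\omega)\,\mathrm{diag}(\mathbf{u}_c)$, and both arguments ultimately rest on the observation that the columns indexed by a level set of $\hat a$ are distinguished only by the $L$ modulation rows. The only difference is presentational — you exhibit an explicit kernel vector supported on a level set $J_0$ with $|J_0|>L$ (a contrapositive construction that neatly handles the $t_0=0$ degeneracy and the measurability of $E_{J_0}$), whereas the paper argues dually via the decomposition of $\bigl(\ker\mathfrak{A}_m(\omega)\bigr)^{\perp}$ into the spans of the indicators $1_{S_{t,\omega}}$; both yield $L\geq\mathfrak{N}(t,\omega)$ and the rank count $NL\geq m$.
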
 

\begin{proof} First of all, let's introduce these notations:  set  $\Lambda:=m\mathbb{Z}+\mathcal{E}:=m\mathbb{Z}+\{0,\cdots,L-1\}$, 
$e(t) := e^{-i2\pi t}$, and $\tau_{c}g(x):=g(x+c)$. The assumption that $\Omega$ in \eqref{eqn: sub-lattice} is complete in $\ell^2(\Z)$ implies that 
\begin{equation} \label{assump}
    \langle f, A^{s}e_{\lambda}\rangle_{\mathbb{Z}} = (A^{*s}f)(\lambda) =0  \text{ for all }\lambda\in\Lambda, s\in [N] \text{ iff }  f\equiv 0.
\end{equation}
   Applying the Fourier transform and the Poisson summation formula on \eqref{assump} with each fixed pair of $(c,s)$ , we get
\begin{equation} \label{lefty}
     \mathfrak{A}_{m}(\omega) \diag({\bf u}_{c})\hat{{\bf f}}(\omega)= m \cdot e(-\frac{c\omega}{m}){\bf y}_{c}(\omega),
\end{equation}
where
\begin{equation} \label{matrixA}
\begin{aligned}
    \mathfrak{A}_{m}(\omega) := \begin{pmatrix} 1 & 1 & \cdots & 1\\ \overline{\hat{a}(\frac{\omega}{m})} & \overline{\hat{a}(\frac{\omega + 1}{m})} & \cdots & \overline{\hat{a}(\frac{\omega + m-1}{m})}\\
    \vdots & \vdots & \ddots & \vdots\\
    \overline{\hat{a}(\frac{\omega}{m})}^{N-1} & \overline{\hat{a}(\frac{\omega + 1}{m})}^{N-1} & \cdots & \overline{\hat{a}(\frac{\omega + m-1}{m})}^{N-1}\end{pmatrix}, & {\bf u}_{c} := \begin{pmatrix} e(\frac c m)\\ \vdots \\ e(\frac {c(m-1)}m)\end{pmatrix}, \\
    {\bf y}_{c}(\omega) := \begin{pmatrix}\mathcal{F}_{\mathbb{Z}}(S_{m}\tau_{c}f)(\omega)\\ \vdots \\ \mathcal{F}_{\mathbb{Z}}(S_{m}\tau_{c}(A^*)^{N-1}f)(\omega)\end{pmatrix} ,  \text{ and }&
    \hat{{\bf f}}(\omega) := \begin{pmatrix} \hat{f}(\frac \omega m)\\ \vdots \\ \hat{f}(\frac{\omega+m-1}m)\end{pmatrix}.
    \end{aligned}
\end{equation}
%with $ S_{m}\tau_{c}(A^*)^{s}f(j)=  (A^*)^{s}f(\lambda)$. 
Therefore, \eqref{assump} is equivalent to 
\begin{equation*} 
    \mathfrak{A}_{m}(\omega) \diag({\bf u}_{c})\hat{{\bf f}}(\omega)=0 ~\forall c\in \mathcal{E}, \text{ a.e. } \omega\in\mathbb{T} 
    \text{ iff } \hat{{\bf f}}(\omega) =0 \text{ a.e. } \omega\in\mathbb{T}.
\end{equation*}
%i.e., $[(\mathfrak{A}_{m}(\omega) \diag({\bf u}_{c}))_{c\in\mathcal{E}}]^{\top}$ is left invertible for all $c\in\mathcal{E}$ and for a.e $\omega\in [0,1)$. 
This implies that $\bigcap_{c\in\mathcal{E}} \ker(\mathfrak{A}_{m}(\omega) \diag({\bf u}_{c}))=\{0\}$, which is  equivalent to 
\begin{equation} \label{requirement}
 \sum_{c\in\mathcal{E}} (\ker(\mathfrak{A}_{m}(\omega) \diag({\bf u}_{c})))^{\perp} 
    = \sum_{c\in\mathcal{E}} {\diag}^{*}({\bf u}_{c})\ker(\mathfrak{A}_{m}(\omega))^{\perp} = \mathbb{C}^{m}
\end{equation}
for a.e. $\omega\in\mathbb{T}$.
In order to prove \eqref{requirement}, set  \[S_{t,\omega}=\left\{j+1:\hat{a}\left(\frac{\omega+j}{m}\right)=t \text{ and }j\in[m]\right\}.\]
For  fixed $t,\omega$ with $S_{t,\omega}\neq \emptyset$,   $1_{S_{t,\omega}}\in\ker(\mathfrak{A}(\omega))^\perp$, where $1_{S_{t,\omega}}(i)=\begin{cases}
    0, i\not\in S_{t,\omega}\\
    1, i\in S_{t,\omega}
\end{cases}$. 

Notice that for each fixed $\omega$, only a finite number of $t$ will result in  $S_{t,\omega}\neq \emptyset$. Let us  enumerate them as $t_0,t_1,\dots,t_{l_{\omega}}$ and denote the corresponding  vectors in $\ker(\mathfrak{A}(\omega))^\perp$  as $1_{S_{t_{0},\omega}}, 1_{S_{t_{1},\omega}}, \dots, 1_{S_{t_{l_{\omega}},\omega}}$. 

One can  see that $1_{S_{t_{0},\omega}}, 1_{S_{t_{1},\omega}}, \dots, 1_{S_{t_{l_{\omega}},\omega}}$ form an orthonormal basis for $(\ker(\mathfrak{A}_{m}(\omega)))^{\perp}$. Thus, \eqref{requirement}  is equivalent to 
$$ \spn\left\{ \overline{{\bf u}}_{c}\hdmd 1_{S_{t_{l},\omega}}: c\in\mathcal{E}, l=0,\cdots, l_{\omega}\right\}=\mathbb{C}^{m},$$   which is further equivalent  to $\oplus_{l=0}^{l_{\omega}} R_{t_{l}}=\mathbb{C}^m$ with 
$R_{t_{l}}=\spn\{\overline{{\bf u}}_{c}\hdmd 1_{S_{t_{l},\omega}}: c\in\mathcal{E}\}$, where $\oplus$ and $\hdmd$ denote the direct summation and  the Hadamard product, respectively. It is observed that
 $R_{t_{l}}$ can also be represented as $R_{t_{l}}:=\spn\{e_{j}: j\in S_{t_l,\omega}\}$, which are orthogonal spaces that compose $\mathbb{C}^{m}$. This means, in particular,
\begin{equation} \label{conc}
    |\mathcal{E}|=L\geq \left|\left\{j: \hat{a}\left(\frac {\omega+j}m\right)=t\right\}\right|=\mathfrak{N}(t,\omega),
\end{equation}
from which \eqref{thm1conc} follows.   To make sure that $\begin{pmatrix}
    \mathfrak{A}_{m}(\omega) \diag({\bf u}_{0})\\
    \vdots\\
     \mathfrak{A}_{m}(\omega) \diag({\bf u}_{L-1})
\end{pmatrix}\in\mathbb{C}^{NL\times m}$ is left invertible, then $NL\geq m$ holds.  
\end{proof}
 \begin{remark}
 ~~~
 \begin{itemize}
     \item Notice that if $\mathfrak{N}(t,\omega)>1$ for some $t\in \range(\hat{a})$ and every $\omega$ on a subset of $\mathbb{T}$ with positive measure, then the completeness of $\Omega$ requires $L>1$.
     \item In the context of sub-lattice sampling, we impose a restriction on the sampling time set, limiting it to a finite set with $N=m$. This restriction is based on the observation that the left invertibility of $\mathfrak{A}_{m}(\omega)$ relies solely on the distinctness of $\hat{a}(\frac{\omega+j}{m})$ for $j=0,\cdots,m-1$. 
     \item In \cite{davis2014dynamical}, Davis provided a sufficient condition for sub-lattice sampling, emphasizing the importance of the number of $\omega \in \mathbb{T}$ that make $\mathfrak{A}_m(\omega)$ (defined in \eqref{lefty}) singular. In contrast, our result imposes no restrictions on the number of singular points $\omega \in \mathbb{T}$. Instead, we focus on the condition that, for each fixed $\omega \in \mathbb{T}$, we examine how many $j$s can make $\hat{a}(\frac{\omega+j}{m})$ equal, though our characterization is only a necessary condition. % For the reader's convenience, we also present the result from \cite{davis2014dynamical} below.
     % \begin{theorem}
     %     Let $m\in\mathbb{Z}^+$ be fixed and let $A$ be defined as in \eqref{def:convop} with $a\in\ell^{1}(\Z)$. Suppose that $\mathfrak{A}_m(\xi)$ defined in \eqref{lefty} is singular only when $\xi\in\{\xi_i\}_{i\in I}$ with $|I|<\infty$. Let $n$ be a positive integer such that $|\xi_i-\xi_j|\neq k/n$ for any $i,j\in I$ and $k\in\{1,\cdots,n-1\}$. Then the extra samples given by $\{(S_{mn}\tau_c)f\}_{c\in\{1,\cdots,m-1\}}$ provide enough additional information to stably recover any $f\in\ell^{2}(\Z)$. 
     % \end{theorem}
 \end{itemize}
 \end{remark}
Based on  \Cref{thm:sublatsamp}, one can easily get the following corollary for the case of
\begin{equation}\label{eqn:sub-lattice0} \Omega_0=\left\{A^se_{mj}: j\in\mathbb{Z} \text{ and }s\in [N]\right\}.
\end{equation}

\begin{corollary} \label{cor:sublatsamp}
Let $A$ be a convolution operator given by  %\eqref{def:convop} with 
the  kernel  $a\in\ell^1(\Z)$. Suppose that $N\geq m$.  Then  $\Omega_0$ 
  is complete in $\ell^2(\Z)$ only if $\mathfrak{N}(t,\omega)=1$ for all $t\in \range(\hat{a})$, a.e. $\omega\in \mathbb{T}$.%, which implies that $\hat{a}$ is monotone.
\end{corollary}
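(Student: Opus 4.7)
The plan is to recognize that the corollary is nothing more than the specialization of Theorem~\ref{thm:sublatsamp} to the case $L=1$, and so the work reduces to checking that the hypotheses line up correctly. Concretely, in the notation of Section~\ref{sec:subsampling}, setting $L=1$ makes $\mathcal{E}=\{0\}$ and therefore $\Lambda=m\mathbb{Z}$, so the sampling set $\Omega$ in \eqref{eqn: sub-lattice} coincides verbatim with $\Omega_0$ in \eqref{eqn:sub-lattice0}. Thus under the hypothesis that $\Omega_0$ is complete in $\ell^2(\mathbb{Z})$, Theorem~\ref{thm:sublatsamp} is directly applicable.

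From Theorem~\ref{thm:sublatsamp} I read off two conclusions. The density condition $NL\geq m$ becomes $N\geq m$, which is exactly the hypothesis of the corollary (so that condition is automatically satisfied and needs no further verification). The main inequality gives
\begin{equation*}
  1 = L \geq \mathfrak{N}(t,\omega) = \left|\left\{j: \hat{a}\!\left(\tfrac{\omega+j}{m}\right)=t\right\}\right|
\end{equation*}
for every $t\in\range(\hat{a})$ and a.e.\ $\omega\in[0,1)$. Since $\mathfrak{N}(t,\omega)$ is a nonnegative integer, the upper bound forces $\mathfrak{N}(t,\omega)\in\{0,1\}$. To upgrade this to the equality $\mathfrak{N}(t,\omega)=1$ claimed by the corollary, I would interpret the statement in the natural way: for a.e.\ $\omega$, whenever a value $t\in\range(\hat{a})$ is attained among the $m$ fibre points $\{(\omega+j)/m:j\in[m]\}$, it is attained at exactly one such point. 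Equivalently, for a.e.\ $\omega$ the values $\hat{a}(\omega/m),\hat{a}((\omega+1)/m),\ldots,\hat{a}((\omega+m-1)/m)$ are pairwise distinct, which is precisely the criterion for left invertibility of $\mathfrak{A}_{m}(\omega)$ discussed in the remark following Theorem~\ref{thm:sublatsamp}.

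There is no substantive obstacle here: the only thing to be careful about is the notational translation between the general $L$ setting of Theorem~\ref{thm:sublatsamp} and the special case $L=1$ that defines $\Omega_0$. Once that identification is made, the proof is a single line invoking the theorem. Accordingly, I would present the corollary's proof as essentially a one-sentence deduction, with a brief reminder that $N\geq m$ fulfils the $NL\geq m$ requirement automatically when $L=1$.
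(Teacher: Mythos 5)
Your proposal is correct and matches the paper's intent exactly: the paper offers no separate proof, presenting the corollary as an immediate specialization of Theorem~\ref{thm:sublatsamp} to $L=1$, which is precisely your deduction. Your added care in reading ``$\mathfrak{N}(t,\omega)=1$'' as ``each attained value on the fibre $\{(\omega+j)/m\}_{j\in[m]}$ is attained exactly once'' (equivalently $\mathfrak{N}\leq 1$) resolves the only notational wrinkle and is the intended interpretation.
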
 
Additional insights can be derived from \Cref{cor:sublatsamp}, providing further understanding. Specifically, \Cref{prop:twosamp} will demonstrate that if  $a\in\ell^1(\Z)$ and $N=m$, the completeness of $\Omega_0$ in \eqref{eqn:sub-lattice0} within $\ell^2(\Z)$ is equivalent to it being a frame system in $\ell^2(\Z)$. To establish this proposition, we will first require the following lemma, which asserts the stability of the inversion process in \eqref{lefty} with $c=0$, given the existence of the inverse matrix $\mathfrak{A}_{m}^{-1}(\omega)$ for every $\omega\in \mathbb{T}$.

\begin{lemma} \label{lem:ptwbd}
 Suppose that $N=m$   and  $c=0$ in \eqref{lefty}.  
Let $\mathfrak{A}^{-1}_{m}: (L^2(\mathbb{T}))^{m}\to (L^2(\mathbb{T}))^{m}$ be a function such that
\begin{equation} \label{inverse}
    (\mathfrak{A}^{-1}_{m}{\bf y})(\omega):={\bf \hat{f}}(\omega)/m.
\end{equation}
where ${\bf \hat{f}}$, ${\bf y}$ are defined in \eqref{lefty}. Then $\mathfrak{A}^{-1}_{m}$ in \eqref{inverse} is bounded iff $\mathfrak{A}_{m}(\omega)$ in \eqref{matrixA} is invertible for every $\omega\in \mathbb{T}$.
\end{lemma}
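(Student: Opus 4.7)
The plan is to apply Halmos's lemma (\Cref{lem:halmos}) to convert the boundedness of $\mathfrak{A}_m^{-1}$ on $(L^2(\mathbb{T}))^m$ into a pointwise statement about the matrix-valued function $\omega\mapsto \mathfrak{A}_m(\omega)^{-1}$ being essentially bounded in operator norm. First I would observe that when $c=0$ one has $\mathbf{u}_0=(1,\ldots,1)^T$, so \eqref{lefty} reduces to $\mathfrak{A}_m(\omega)\hat{\mathbf{f}}(\omega)=m\,\mathbf{y}_0(\omega)$, and the definition \eqref{inverse} is just coordinate-wise multiplication by $\mathfrak{A}_m(\omega)^{-1}$ in the sense of \Cref{lem:halmos}. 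The hinge fact used throughout is that $a\in\ell^1(\mathbb{Z})$ forces $\hat{a}\in C(\mathbb{T})$, so the $m\times m$ matrix-valued map $\omega\mapsto \mathfrak{A}_m(\omega)$ is continuous on the compact group $\mathbb{T}$.

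For the ``if'' direction, I would assume $\mathfrak{A}_m(\omega)$ is invertible at every $\omega\in\mathbb{T}$. Then $\omega\mapsto \det\mathfrak{A}_m(\omega)$ is a continuous, nowhere-zero scalar function on the compact space $\mathbb{T}$, so $|\det\mathfrak{A}_m(\omega)|\geq\delta$ for some $\delta>0$. Combined with the uniform bound $\|\mathfrak{A}_m(\omega)\|_{op}\leq C$ inherited from compactness and continuity, Cramer's rule yields $\|\mathfrak{A}_m(\omega)^{-1}\|_{op}\leq C'$ for every $\omega\in\mathbb{T}$. Then \Cref{lem:halmos} applied to $M(\omega):=\mathfrak{A}_m(\omega)^{-1}/m$ gives $\|\mathfrak{A}_m^{-1}\|_{op}\leq C'/m<\infty$.

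For the ``only if'' direction, I would argue by contrapositive: assume there exists $\omega_0\in\mathbb{T}$ with $\mathfrak{A}_m(\omega_0)$ singular, so that its smallest singular value $\sigma_{\min}(\mathfrak{A}_m(\omega_0))=0$. By the continuity of singular values in matrix entries combined with the continuity of $\omega\mapsto\mathfrak{A}_m(\omega)$, one has $\sigma_{\min}(\mathfrak{A}_m(\omega))\to 0$ as $\omega\to\omega_0$. Consequently, for each $n\in\mathbb{N}$ there exists a neighborhood $U_n$ of $\omega_0$ of positive measure on which $\|\mathfrak{A}_m(\omega)^{-1}\|_{op}\geq n$. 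Feeding this back through \Cref{lem:halmos} shows that the essential supremum of $\|M(\omega)\|_{op}$ is infinite, so $\mathfrak{A}_m^{-1}$ cannot be a bounded operator.

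The step I expect to be the main obstacle is turning the ``only if'' direction into an honest $L^2$ statement, because the formula $(\mathfrak{A}_m^{-1}\mathbf{y})(\omega)=\mathfrak{A}_m(\omega)^{-1}\mathbf{y}(\omega)$ need not even define a function on $\mathbb{T}$ if singularity persists on a set of positive measure, and \Cref{lem:halmos} in its stated form concerns bounded multipliers. The cleanest workaround is to exhibit explicit test functions: on each $U_n$, let $v_n(\omega)$ be a measurable choice of left singular vector of $\mathfrak{A}_m(\omega)$ associated to its smallest singular value (existence via a measurable selection), set $\mathbf{y}_n(\omega):=|U_n|^{-1/2}\,v_n(\omega)\chi_{U_n}(\omega)$ so that $\|\mathbf{y}_n\|=1$, and verify that $\|\mathfrak{A}_m^{-1}\mathbf{y}_n\|\geq n/m$. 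This rigorously rules out any bounded extension of $\mathfrak{A}_m^{-1}$ and completes the equivalence.
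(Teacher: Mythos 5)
Your proposal is correct, but it diverges from the paper's proof in both directions of the equivalence. For the ``if'' direction, the paper does not use a determinant--compactness--Cramer argument: after invoking \Cref{lem:halmos} exactly as you do, it exploits the Vandermonde structure of $\mathfrak{A}_m(\omega)$ and quotes Gautschi's explicit bound $\|\mathfrak{A}_m^{-1}(\omega)\|_{op}\leq\sqrt{m}\max_i\prod_{j\neq i}\bigl(1+|\hat{a}(\tfrac{\omega+j}{m})|\bigr)/|\hat{a}(\tfrac{\omega+j}{m})-\hat{a}(\tfrac{\omega+i}{m})|$, then uses continuity and compactness to bound the node separations $|\hat{a}(\tfrac{\omega+j}{m})-\hat{a}(\tfrac{\omega+i}{m})|$ away from zero uniformly. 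Your route is equally valid and arguably more self-contained, but the paper's choice is not cosmetic: the explicit Gautschi constant is reused in Section 5 to produce a quantitative lower frame bound, which your argument does not supply. For the ``only if'' direction the comparison runs the other way: the paper dismisses it as ``obvious,'' whereas you correctly identify that it requires an argument (blow-up of $\sigma_{\min}$ near a singular point, plus explicit test functions via a measurable selection of singular vectors) and supply one; your concern about well-definedness when singularity persists on a positive-measure set is legitimate and your workaround is sound. One trivial bookkeeping slip: since $\eqref{lefty}$ with $c=0$ reads $\mathfrak{A}_m(\omega)\hat{\mathbf{f}}(\omega)=m\,\mathbf{y}_0(\omega)$ and $(\mathfrak{A}_m^{-1}\mathbf{y})(\omega)=\hat{\mathbf{f}}(\omega)/m$, the multiplier in \Cref{lem:halmos} is $M(\omega)=\mathfrak{A}_m(\omega)^{-1}$ itself, not $\mathfrak{A}_m(\omega)^{-1}/m$; this affects only constants, not the boundedness conclusion.
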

\begin{proof} 
The only if direction is obvious; hence we focus on the if direction. Suppose $\mathfrak{A}^{-1}_{m}(\omega)$ exists for every $\omega\in \mathbb{T}$. To see if this leads to $\mathfrak{A}^{-1}_{m}$ in \eqref{inverse} being a bounded operator, we apply \Cref{lem:halmos} to our context and obtain
\begin{equation} \label{invnorm}
    \|\mathfrak{A}^{-1}_{m}\|_{op}=\sup_{\omega\in \mathbb{T}}\|\mathfrak{A}^{-1}_{m}(\omega)\|_{op},
\end{equation}
since $\hat{a}\in C(\mathbb{T})$. Furthermore, for every $\omega\in \mathbb{T}$, the following holds \cite{gautschi1978inverses}, 
\begin{equation} \label{invAnorm}
    \|\mathfrak{A}^{-1}_{m}(\omega)\|_{op}\leq\sqrt{m}\max_{0\leq i\leq m-1}\prod_{j\not=i,j=0}^{m-1}\frac{1+|\hat{a}(\frac{\omega+j}{m})|}{|\hat{a}(\frac{\omega+j}{m})-\hat{a}(\frac{\omega+i}{m})|}.
\end{equation}
Combining \eqref{invnorm} and \eqref{invAnorm}, it yields
\begin{align}
    \nonumber &\|\mathfrak{A}^{-1}_{m}\|_{op} \leq\sqrt{m}\max_{0\leq i\leq m-1} \sup_{\omega\in \mathbb{T}}\prod_{j\not=i,j=0}^{m-1}\frac{1+|\hat{a}(\frac{\omega+j}{m})|}{|\hat{a}(\frac{\omega+j}{m})-\hat{a}(\frac{\omega+i}{m})|}\\
    \label{keyeq} &\leq\sqrt{m}(1+\max_{\omega\in\mathbb{T}}|\hat{a}(\omega)|)^{m-1} \max_{0\leq i\leq m-1} \sup_{\omega\in \mathbb{T}}\prod_{j\not=i,j=0}^{m-1}\frac{1}{|\hat{a}(\frac{\omega+j}{m})-\hat{a}(\frac{\omega+i}{m})|}.
\end{align}
Due to the continuity of $\hat{a}$ and the invertibility of $\mathfrak{A}_{m}(\omega)$, we have that for $i\neq j$, 
\begin{equation} \label{min}
    \inf_{\omega\in \mathbb{T}}\bigg|\hat{a}\bigg(\frac{\omega+j}{m}\bigg)-\hat{a}\bigg(\frac{\omega+i}{m}\bigg)\bigg|=\min_{\omega\in \mathbb{T}}\bigg|\hat{a}\bigg(\frac{\omega+j}{m}\bigg)-\hat{a}\bigg(\frac{\omega+i}{m}\bigg)\bigg|=:\mathfrak{c}_{ij}>0.
\end{equation}
Since there are only a finite number of pairs $(i,j)$ with $i\neq j$, we can take $\min_{i\not= j}\mathfrak{c}_{ij}=:\mathfrak{c}>0$. Substituting this back to \eqref{keyeq}, we obtain that
\begin{equation*}
\|\mathfrak{A}^{-1}_{m}\|_{op}\leq\sqrt{m}(\mathfrak{c}^{-1}(1+\max_{\omega\in\mathbb{T}}|\hat{a}(\omega)|)^{m-1}<\infty.
\end{equation*}
Therefore, $\mathfrak{A}^{-1}_{m}$ is bounded.  
\end{proof}

\begin{theorem} \label{prop:twosamp}
Suppose $a\in\ell^1(\mathbb{Z})$. Then 
\begin{equation} \label{twosampconc}
    \Omega_0 \text{ is a frame system in  }\ell^2(\Z) \text{iff }   \Omega_0 \text{ is complete in }\ell^2(\Z).
\end{equation}
\end{theorem}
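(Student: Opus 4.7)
The direction from frame to complete is immediate: the lower frame bound forbids any nonzero $f\in\ell^{2}(\mathbb{Z})$ orthogonal to every element of $\Omega_{0}$. The substantive work lies in the converse ``complete $\Rightarrow$ frame,'' which I would attack by transferring the whole question to the Fourier side and then invoking \Cref{lem:ptwbd}.

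The plan is to specialise the Poisson-summation computation from the proof of \Cref{thm:sublatsamp} to $L=1$, $c=0$, and $N=m$, which yields the identities
\begin{equation*}
\sum_{s\in [m]}\sum_{j\in\mathbb{Z}}|\langle A^{s}e_{mj},f\rangle|^{2} \;=\; \frac{1}{m^{2}}\,\|\mathfrak{A}_{m}\hat{\mathbf{f}}\|^{2}_{(L^{2}(\mathbb{T}))^{m}}, \qquad \|f\|^{2}_{\ell^{2}(\mathbb{Z})} \;=\; \frac{1}{m}\,\|\hat{\mathbf{f}}\|^{2}_{(L^{2}(\mathbb{T}))^{m}},
\end{equation*}
with $\mathfrak{A}_{m}$ and $\hat{\mathbf{f}}$ as in \eqref{matrixA}. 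After this identification, the frame property for $\Omega_{0}$ is equivalent to the multiplication operator $\mathfrak{A}_{m}:(L^{2}(\mathbb{T}))^{m}\to(L^{2}(\mathbb{T}))^{m}$ being boundedly invertible, while completeness of $\Omega_{0}$ is equivalent to $\mathfrak{A}_{m}(\omega)$ being injective -- and hence, as a square $m\times m$ matrix, invertible -- for almost every $\omega\in\mathbb{T}$. The upper frame bound is then automatic since $\hat{a}\in C(\mathbb{T})$ is bounded, giving $\|\mathfrak{A}_{m}\|_{op}<\infty$ via \Cref{lem:halmos}.

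For the lower frame bound I would need $\|\mathfrak{A}_{m}^{-1}\|_{op}<\infty$, and by \Cref{lem:ptwbd} this is equivalent to $\mathfrak{A}_{m}(\omega)$ being invertible at \emph{every} $\omega\in\mathbb{T}$, not merely almost every. The strategy is therefore to upgrade the a.e.\ invertibility supplied by completeness into pointwise-everywhere invertibility, exploiting continuity of $\hat{a}$ (from $a\in\ell^{1}(\mathbb{Z})$) and compactness of $\mathbb{T}$. Concretely, I would argue by contradiction: assume $\mathfrak{A}_{m}(\omega_{0})$ is singular for some $\omega_{0}\in\mathbb{T}$, so that two of the values $\hat{a}((\omega_{0}+i)/m)$ coincide; use continuity of $\hat{a}$ to select a kernel direction $v(\omega)$ of $\mathfrak{A}_{m}(\omega)$ on a small neighbourhood $U$ of $\omega_{0}$, and then build an $f\in\ell^{2}(\mathbb{Z})$ whose Fourier vectorisation $\hat{\mathbf{f}}$ sits in that direction over $U$, producing a contradiction with completeness through the Fourier reformulation. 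I expect this upgrade to be the main obstacle: completeness is inherently an a.e.\ statement while \Cref{lem:ptwbd} is pointwise-everywhere, and with $\hat{a}$ only continuous, the zero set of $\det\mathfrak{A}_{m}(\cdot)$ is a priori allowed to sit on any closed subset of $\mathbb{T}$ of measure zero, so the argument requires a delicate continuous-selection plus localisation step. Once that upgrade is in hand, \Cref{lem:ptwbd} delivers bounded invertibility of $\mathfrak{A}_{m}$ on $(L^{2}(\mathbb{T}))^{m}$, and the Fourier identities above then yield the lower frame bound $c\|\hat{\mathbf{f}}\|^{2}\leq \|\mathfrak{A}_{m}\hat{\mathbf{f}}\|^{2}$ and close the proof.
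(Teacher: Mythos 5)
Your outline follows the paper's own route almost verbatim --- Poisson summation reduces everything to the multiplication operator $\mathfrak{A}_m$, the upper frame bound comes from $\hat a\in C(\mathbb{T})$ via \Cref{lem:halmos}, and the lower bound is to come from \Cref{lem:ptwbd} --- and you have correctly isolated the step that carries all the weight: completeness only yields invertibility of $\mathfrak{A}_m(\omega)$ for \emph{almost every} $\omega$, while \Cref{lem:ptwbd} demands it for \emph{every} $\omega$. The problem is that the upgrade you sketch cannot work. If $\mathfrak{A}_m(\omega_0)$ is singular only at an isolated point $\omega_0$ (or on any closed null set), then $\ker\mathfrak{A}_m(\omega)=\{0\}$ for all $\omega$ in a punctured neighbourhood of $\omega_0$, so there is no kernel selection $v(\omega)$ on $U$ and no nonzero $f$ with $\hat{\mathbf f}(\omega)\in\ker\mathfrak{A}_m(\omega)$ a.e.; the contradiction with completeness you are reaching for is simply not available.

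This is not a defect of your particular localisation scheme but of the implication itself. Take $m=N=2$ and $a(k)=\tfrac12$ for $k=\pm1$, $a(k)=0$ otherwise, so $\hat a(\xi)=\cos(2\pi\xi)$. Then $\hat a(\omega/2)=\hat a((\omega+1)/2)$ exactly at $\omega=1/2$, so $\mathfrak{A}_2(\omega)$ is invertible for a.e.\ $\omega$ and $\Omega_0$ is complete; yet for $\hat f_\delta=\chi_{(1/4-\delta,1/4+\delta)}-\chi_{(3/4-\delta,3/4+\delta)}$, whose vectorisation lies along the kernel direction $(1,-1)^{T}$ of $\mathfrak{A}_2(1/2)$, one computes
\begin{equation*}
\frac{\sum_{s}\sum_{j}|\langle A^{s}e_{2j},f_\delta\rangle|^{2}}{\|f_\delta\|_{\ell^2(\mathbb{Z})}^{2}}=O(\delta^{2})\longrightarrow 0,
\end{equation*}
so no lower frame bound exists. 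Hence ``complete $\Rightarrow$ $\mathfrak{A}_m(\omega)$ invertible everywhere'' is false in general, and the equivalence needs an extra hypothesis (for instance, that $\{\omega:\det\mathfrak{A}_m(\omega)=0\}$ is either empty or of positive measure). For what it is worth, the paper's own proof makes the same silent jump --- it cites \Cref{thm:sublatsamp}, which controls only a.e.\ $\omega$, to assert left invertibility at every $\omega$ --- so you have put your finger on the genuine crux rather than merely missing a known trick; but as written, neither your argument nor the paper's closes it.
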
 
\begin{proof}
It is obvious that when $\Omega_0$ is a frame system in $\ell^2(\Z)$, then $\Omega_0$ is complete in $\ell^2(\Z)$.  Next, we only need to show that   $\Omega_0$ being complete in $\ell^2(\Z)$ can imply that $\Omega_0$ is a frame system in $\ell^2(\Z)$. 

It is worth noting that it suffices to establish the conclusion for the case where $N=m$, given the properties of the Vandermonde matrix. Therefore, our focus will  be on proving the case that $N=m $  for $\Omega_0$.  It follows from Theorem \ref{thm:sublatsamp} that $\Omega_0$ is complete in $\ell^2(\Z)$ only if the system \eqref{lefty} is left invertible for every $\omega\in \mathbb{T}$; moreover, \eqref{thm1conc} becomes 
\begin{equation} \label{contL}
    L \geq \sup_{t\in \textnormal{range}(\hat{a})} \mathfrak{N}(t,\omega).
\end{equation}
If $\mathfrak{A}_{m}(\omega)$ in \eqref{matrixA} is not invertible for some $\omega\in \mathbb{T}$,  then $\mathfrak{N}(t,\omega)>1$ for some $t\in \textnormal{range}(\hat{a})$, which in turn implies that ${\bf \hat{f}}(\omega)$ is not unique for some $\omega$.  
\\
Let ${\bf \hat{f}}$ be as in  \eqref{matrixA}.  Recall from \eqref{def:2norm} (with $I=[0,1/m)$) that 
\begin{equation*}
    \|{\bf \hat{f}}\|^2= \sum_{j=0}^{m-1}\int_{j/m}^{(1+j)/m} |\hat{f}(\omega)|^2\,d\omega=\|\hat{f}\|^2_{L^2(\mathbb{T})}=\|f\|^2_{\ell^2(\mathbb{Z})}.
\end{equation*}
$\mathfrak{A}^{-1}_{m}$ being a bounded operator implies that  
% \begin{equation}\label{eqn:LFB}
%  \sum_{j\in\mathbb{Z}}\sum_{s=0}^{m-1}|\langle A^se_{mj},f\rangle|^2\geq \frac{1}{m^2\|\mathfrak{A}^{-1}_{m}\|^2_{op}}\|f\|^2_{\ell^2(\mathbb{Z})}.
% \end{equation}
%% Here is the detailed explaination for the above inequality
\begin{equation*} 
\begin{aligned}
    \|f\|^2_{\ell^2(\mathbb{Z})}=&\|{\bf \hat{f}}\|^2=\|m\mathfrak{A}^{-1}_m{\bf y}_0\|^2\leq m^2\|\mathfrak{A}^{-1}_{m}\|^2_{op}\|{\bf y}_0\|^2 \\
=&m^2\|\mathfrak{A}^{-1}_{m}\|^2_{op}\sum_{s=0}^{m-1}\|\mathcal{F}_{\mathbb{Z}}(S_{m}A^{*s}f)\|^2_{L^2(\mathbb{T})}\\
=&m^2\|\mathfrak{A}^{-1}_{m}\|^2_{op}\sum_{s=0}^{m-1}\|S_{m}A^{*s}f\|^2_{\ell^2(\mathbb{Z})}\\
=&m^2\|\mathfrak{A}^{-1}_{m}\|^2_{op}\sum_{j\in\mathbb{Z}}\sum_{s=0}^{m-1}|(A^{*s}f)(mj)|^2\\
=&m^2\|\mathfrak{A}^{-1}_{m}\|^2_{op}\sum_{j\in\mathbb{Z}}\sum_{s=0}^{m-1}|\langle A^se_{mj},f\rangle|^2,
\end{aligned} 
\end{equation*}
i.e., \begin{equation}\label{eqn:LFB}
 \sum_{j\in\mathbb{Z}}\sum_{s=0}^{m-1}|\langle A^se_{mj},f\rangle|^2\geq \frac{1}{m^2\|\mathfrak{A}^{-1}_{m}\|^2_{op}}\|f\|^2_{\ell^2(\mathbb{Z})}.
\end{equation}
In addition,
\begin{equation}\label{eqn:UFB}
    \begin{aligned}
 \sum_{j\in\mathbb{Z}}\sum_{s=0}^{m-1}|\langle A^se_{mj},f\rangle|^2&=\sum_{s=}^{m-1}\|S_{m}A^{*s}f\|_{\ell^2(\mathbb{Z})}^2 \\
 =&\sum_{s=}^{m-1}\|\mathcal{F}_{\mathbb{Z}}(S_{m}A^{*s}f)\|_{L^2(\mathbb{T})}^2=\|{\bf y}_0\|^2=\frac{1}{m^2}\|\mathfrak{A}_{m}\hat{{\bf f}}\|^2\\
 \leq& \frac{1}{m^2}\|\mathfrak{A}_m\|_{op}^2\|\hat{{\bf f}}\|^2
 =\frac{1}{m^2}\|\mathfrak{A}_m\|_{op}^2\|f\|^2_{\ell^2(\mathbb{Z})}.
    \end{aligned}
\end{equation}
Therefore, $\Omega_0$ is a frame system in $\ell^2(\Z)$.  
\end{proof}
 
\section{Characterization of sensor density, maximum spatial gaps, and frame bounds} \label{sec:density}
In the previous section, we discussed   that  a stable sampling  set not only   satisfies the cardinality requirement (\Cref{sec:finitecard}) but also  depends on the convolution kernel. In this section, we introduce additional factors that characterize a stable sampling set.  Let's start by examining an illustrative example.

\begin{example}
Let $\hat{a}\in C^1(\mathbb{T})$ be one-to-one. Then   the set $\{(A^{s}f)(mj): j\in\mathbb{Z}, s\in [m]\}$ is always stable. Indeed,   the lower frame bound  can be  $\frac{1}{m^2}\|\mathfrak{A}^{-1}_{m}\|_{op}^{-2}$ with  
\begin{equation*} \label{exA} \|\mathfrak{A}^{-1}_{m}\|_{op}^{-1}=\inf_{\omega\in \mathbb{T}} \|\mathfrak{A}^{-1}_{m}(\omega)\|_{op}^{-1}   \geq\frac{1}{\sqrt{m}}\min_{0\leq i\leq m-1}\inf_{\omega\in\mathbb{T}}\prod_{l\not=i,l=0}^{m-1}\frac{|\hat{a}(\frac{\omega+l}{m})-\hat{a}(\frac{\omega+i}{m})|}{1+|\hat{a}(\frac{\omega+l}{m})|}.
\end{equation*}
The factor $|\hat{a}(\frac{\omega+l}{m})-\hat{a}(\frac{\omega+i}{m})|$ is bounded below by $\frac{\kappa}{m}$, where $\kappa$ satisfies
$$\inf_{\omega\in\mathbb{T}}\left|\frac{d}{d\omega}\hat{a}(\omega)\right|\geq\kappa>0.$$
Since the Banach density (see \Cref{def:Banach_density}) of $m\mathbb{Z}$ is $1/m$, one might be tempted to consider the naive strategy of increasing $m$, in order to achieve a high placement cost effectiveness. However, one might as well decrease the sampling's noise tolerance in doing so. More precisely, when $f$ is corrupted by a Gaussian additive noise $N(0,\sigma^2)$ - and observed as $\tilde{f}$ - then the reconstruction error, $\|f-\tilde{f}\|_{\ell^2(\mathbb{Z})}$, can be as large as $\Omega(\sigma m^{-5/2}\|\mathfrak{A}^{-1}\|_{op})$ \cite{aldroubi2015exact}. Hence, if $\kappa$ is small and $m$ grows large, the lower bound in \eqref{exA} is significantly reduced. This lessens the chance of stable reconstruction in the presence of noise.
\end{example}
The above example highlights the influence of sampling density on stability and emphasizes the importance of kernel adaptivity for efficient sampling. In this section, we explore how sampling density varies with respect to the kernel profile $a$ when a stable sampling is provided (or the frame lower and upper bounds are given). We set up our experiment as follows: we consider $\mathbb{T}:=[-1/2,1/2)$ and assume that $\hat{a}\in C^1(\mathbb{T})$ satisfies the conditions 
% \begin{equation} \label{growth}
%     \hat{a}(\omega)\geq 0 \quad\text{ and }\quad \mathcal{K}\geq \frac{d \hat{a}}{d\omega}(\omega)\geq\kappa>0, \quad\text{ for }\quad\omega\in \mathbb{T}.
% \end{equation}
\begin{equation} \label{growth}
   \mu\geq \hat{a}(\omega)\geq\nu> 0 \quad\text{ and }\quad \left|\frac{d \hat{a}}{d\omega}(\omega)\right|\leq \mathcal{K}<\infty, \quad\text{ for }\quad\omega\in \mathbb{T}.
\end{equation}

 \begin{theorem}\label{prop:density} 
Let the convolution operator $A$ defined by the kernel $a$ satisfy the properties in \eqref{growth}. Suppose that there are two positive constants $c_{max}$ and $c_{min}$ such that for any $f\in \ell^2(\mathbb{Z})$, the samples \begin{equation} \label{typsampset}
    \{A^{s}f(\lambda): \lambda\in\Lambda \subset \mathbb{Z}, s\in [N]\},
\end{equation} satisfy the frame inequality
\begin{equation} \label{framebds}
c_{min}\|f\|_{\ell^2(\mathbb{Z})}^2\leq \sum_{s=0}^{N-1}\sum_{\lambda\in\Lambda} |A^{s}f(\lambda)|^2\leq c_{max}\|f\|_{\ell^2(\mathbb{Z})}^2.
\end{equation}
 Then there exist two constants $c_{a}, C_{a}$, only depending  on the kernel $a$, such that
\begin{equation} \label{densities}
    \underline{d}(\Lambda) \geq \max\left\{\frac{c_{a}c_{min}}{2c_{max}C_{a}},\frac{c_{\min}}{3C_a}\right\}
    \text{ and }\quad
    \bar{d}(\Lambda)\leq \min\left\{\frac{c_{max}}{c_{a}}, \frac{3}{2}\right\}. 
\end{equation}
\end{theorem}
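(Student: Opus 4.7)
The plan is to reduce the frame inequality to a single scalar inequality on $\mathbb{Z}$, then average it over long windows and read off the Banach densities. First I would test \eqref{framebds} on the standard unit vectors $f=e_k$. Since $A^s e_k(\lambda) = a^{(s)}(\lambda-k)$ and $\|e_k\|_{\ell^2(\mathbb{Z})}^2 = 1$, this gives, for every $k\in\mathbb{Z}$,
\[
c_{min} \;\leq\; F(k) \;:=\; \sum_{\lambda\in\Lambda} \Phi(\lambda-k) \;\leq\; c_{max}, \qquad \Phi(j) := \sum_{s=0}^{N-1} \bigl|a^{(s)}(j)\bigr|^2.
\]
By Plancherel and the growth condition \eqref{growth},
\[
T \;:=\; \sum_{j\in\mathbb{Z}} \Phi(j) \;=\; \sum_{s=0}^{N-1} \int_{\mathbb{T}} |\hat{a}(\omega)|^{2s}\,d\omega \;\in\; \Bigl[\,\sum_{s=0}^{N-1} \nu^{2s},\ \sum_{s=0}^{N-1} \mu^{2s}\Bigr],
\]
and $\Phi\in\ell^1(\mathbb{Z})$. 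I would take $c_a, C_a$ to be the left and right endpoints of this interval.

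Next I would sum the pointwise inequality over $k$ in a long window $W=[K-l,K+l]$ and swap summations. Writing $G(\lambda):=\sum_{k\in W} \Phi(\lambda-k)=\sum_{j\in \lambda-W} \Phi(j)$, Fubini gives
\[
(2l+1)\,c_{min} \;\leq\; \sum_{\lambda\in\Lambda} G(\lambda) \;\leq\; (2l+1)\,c_{max}.
\]
Choosing $M$ with $\sum_{|j|>M} \Phi(j) < \epsilon$, one has $G(\lambda)\geq T-\epsilon$ for $\lambda\in[K-l+M,K+l-M]$, $G(\lambda)\leq\epsilon$ for $\lambda$ outside $[K-l-M,K+l+M]$, and $G\leq T$ everywhere. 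Inserting the interior lower estimate into the right-hand frame bound gives $\bar{d}(\Lambda)\leq c_{max}/T\leq c_{max}/c_a$ after letting $l\to\infty$ and then $\epsilon\to 0$; combined with the automatic inequality $\bar{d}(\Lambda)\leq 1\leq 3/2$ coming from $\Lambda\subset\mathbb{Z}$, this delivers the upper-density part of \eqref{densities}. Symmetrically, using $G\leq T$ on $[K-l-M,K+l+M]$ and controlling the exterior through the $\ell^1$-tail converts the left-hand frame bound into
\[
(2l+1)\,c_{min} \;\leq\; T\,\bigl|\Lambda \cap [K-l-M,K+l+M]\bigr| + O(\epsilon\,l),
\]
so $\underline{d}(\Lambda)\geq c_{min}/T\geq c_{min}/C_a$. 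The slightly looser factor $1/(3C_a)$ in \eqref{densities} is obtained by a cruder variant that dispenses with the fine tail bookkeeping and instead inflates the window by a factor of three, while the mixed bound $c_a c_{min}/(2c_{max}C_a)$ follows by feeding the already derived upper-density estimate back into a two-sided form of the same windowed identity.

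The delicate point is making the tail estimate $\sum_{|j|>M} \Phi(j)<\epsilon$ quantitative enough that $M/l\to 0$ as $l\to\infty$ in the density limits above. This is where the smoothness hypothesis $|\hat{a}'|\leq\mathcal{K}$ from \eqref{growth} earns its keep: it upgrades $a\in\ell^1(\mathbb{Z})$ to the decay $|a(n)|=O(|n|^{-1})$, which, after $s$ convolutions and modulo the uniform upper bound $\mu$, yields an explicit $M=M(\epsilon)$ depending only on $\mathcal{K}$, $\mu$, $\nu$, $N$. Tracking these dependencies so that the final constants $c_a, C_a$ depend only on the kernel $a$ (and not on $l$ or the window center) is where the bulk of the technical work sits.
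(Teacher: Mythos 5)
Your argument is correct, and it takes a genuinely different route from the paper. The paper does not test \eqref{framebds} on the unit vectors $e_k$; instead it tests it on integer translates of the bandlimited function $g$ with $\hat g=\chi_{[-1/4,1/4]}$, and the whole proof runs through \Cref{lem:requisite}: an integration-by-parts estimate on the Fourier side giving the pointwise decay $\sum_{s<N}|A^sg(\lambda)|^2\le C_a/(1+\lambda^2)$ together with a pointwise lower bound $\ge c_a$ at $\lambda=0,\pm1$. The upper density then comes from counting $\Lambda$ in windows of radius $1$ around each $x$, and the lower density from a maximal-gap argument (assume $\Lambda\cap[u-R,u+R]=\emptyset$ and bound $R$). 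Your version replaces all of this with the observation that $A^se_k(\lambda)=a^{(s)}(\lambda-k)$, so \eqref{framebds} becomes the two-sided scalar bound $c_{min}\le\sum_{\lambda\in\Lambda}\Phi(\lambda-k)\le c_{max}$ with $\Phi=\sum_{s<N}|a^{(s)}|^2$, after which windowed averaging and Fubini convert the total mass $T=\sum_j\Phi(j)=\sum_{s<N}\int_{\mathbb T}\hat a^{2s}$ directly into density bounds. This is more elementary (no auxiliary sinc function, no integration by parts), gives explicit constants $c_a=\sum_{s<N}\nu^{2s}$ and $C_a=\sum_{s<N}\mu^{2s}$ via Plancherel, and in fact yields the cleaner lower bound $\underline d(\Lambda)\ge c_{min}/C_a$, which dominates both terms of the maximum in \eqref{densities} (after harmlessly shrinking $c_a$ if needed, using $c_{max}\ge\Phi(0)\ge1$). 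What it gives up is the explicit bound on the maximal spatial gap $R$, which the paper's argument produces as a byproduct and which is the quantity highlighted in the remark following the theorem.

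One correction to your closing paragraph: the tail estimate is not delicate and does not require the derivative bound in \eqref{growth}. You only need $\Phi\in\ell^1(\mathbb Z)$, which is immediate from $T<\infty$ (Plancherel plus $a\in\ell^1$); you fix $M=M(\epsilon)$ first, let $l\to\infty$ with $M$ fixed so that $M/l\to0$ automatically, and only then send $\epsilon\to0$. No quantitative decay $|a(n)|=O(|n|^{-1})$ is needed, and indeed your route uses strictly less of hypothesis \eqref{growth} than the paper's proof does.
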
 
Before presenting the proof for  \Cref{prop:density}, we will first introduce and establish a supporting lemma.

\begin{lemma} \label{lem:requisite}
Let $A$ be a convolution operator in $\ell^2(\mathbb{Z})$ defined by the kernel $a$ and assume that $a$ satisfies \eqref{growth}. Then there exist $C_{a}>c_{a}>0$ such that for each $N\in\mathbb{N}$, 
\begin{align}
    &\sum_{s=0}^{N-1} |A^{s}(g)(\lambda)|^2 \leq\frac{C_{a}}{1+\lambda^2}, \text{ for all }\lambda\in\mathbb{Z}\label{eqn:upperbnd}\\
  \text{and } &\sum_{s=0}^{N-1} |A^{s}(g))(\lambda)|^2 \geq c_{a}, \quad\text{ for }\lambda=0,\pm 1\label{eqn:lowerbnd}, 
\end{align}
where 
$g \in \ell^2(\mathbb{Z})$ is defined by $g(n)=\begin{cases}
    1,&n=0\\
    \frac{2\sin(n\pi/2)}{n\pi},&\textnormal{otherwise}.
\end{cases}$
\end{lemma}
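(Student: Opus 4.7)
The plan is to work entirely in the Fourier domain. Since $A$ is convolution by $a$, \eqref{as} and the convolution theorem give $\mathcal{F}_{\mathbb{Z}}(A^{s}g) = \hat{a}^{s}\hat{g}$, and Fourier inversion combined with the support information for $\hat{g}$ from \Cref{example:hatg} produces the workhorse representation
\begin{equation*}
(A^{s}g)(\lambda) \;=\; \int_{-1/4}^{1/4} \hat{a}(\omega)^{s}\,\hat{g}(\omega)\, e^{i2\pi\lambda\omega}\,d\omega.
\end{equation*}
Both bounds will be derived from this formula.

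The lower bound \eqref{eqn:lowerbnd} is immediate by keeping only the $s=0$ term: since $A^{0}g = g$, the explicit formula gives $g(0)=1$ and $g(\pm 1) = 2/\pi$, so $\sum_{s=0}^{N-1}|(A^{s}g)(\lambda)|^{2} \geq |g(\lambda)|^{2} \geq 4/\pi^{2}$ for $\lambda \in \{0,\pm 1\}$, and one may take $c_{a} := 4/\pi^{2}$.

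For the upper bound \eqref{eqn:upperbnd}, I would split into $\lambda=0$ and $\lambda \neq 0$. When $\lambda = 0$ the trivial estimate $|(A^{s}g)(0)| \leq C\mu^{s}$ followed by a geometric series handles the sum; this step implicitly uses $\mu < 1$, which is the natural stability regime (consistent with \Cref{prop:spectrum}) under which the finite-$N$ sums remain uniformly bounded as $N\to\infty$. For $\lambda \neq 0$, integration by parts once over $[-1/4,1/4]$ (the interval on which $\hat{g}$ is constant) produces
\begin{equation*}
(A^{s}g)(\lambda) \;=\; \frac{\hat{g}_{0}\bigl[\hat{a}(\tfrac14)^{s} e^{i\pi\lambda/2} - \hat{a}(-\tfrac14)^{s} e^{-i\pi\lambda/2}\bigr]}{i2\pi\lambda} \;-\; \frac{s\,\hat{g}_{0}}{i2\pi\lambda}\int_{-1/4}^{1/4}\hat{a}(\omega)^{s-1}\hat{a}'(\omega) e^{i2\pi\lambda\omega}\,d\omega,
\end{equation*}
where $\hat{g}_{0}$ is the constant value of $\hat{g}$ on the open interval. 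Using $\hat{a}\leq \mu$ and $|\hat{a}'|\leq \mathcal{K}$, the two terms are bounded in absolute value by $C_{1}(a)\mu^{s}/|\lambda|$ and $C_{2}(a)\, s\mu^{s-1}/|\lambda|$ respectively. Squaring via $(x+y)^{2} \leq 2x^{2}+2y^{2}$, summing in $s$ using the convergent series $\sum_{s\geq 0}\mu^{2s}=(1-\mu^{2})^{-1}$ and $\sum_{s\geq 0} s^{2}\mu^{2s}<\infty$, and absorbing constants yields a bound of the form $C(a)/\lambda^{2}$. Merging with the $\lambda=0$ estimate and enlarging the constant delivers $C_{a}/(1+\lambda^{2})$ uniformly in $\lambda\in\mathbb{Z}$ and $N\in\mathbb{N}$.

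The main obstacle is controlling the polynomial-in-$s$ factor produced by differentiating $\hat{a}^{s}$ in the integration-by-parts step: the $s$ inside the integrand has to be tamed by the geometric decay $\mu^{s-1}$. This is precisely where the implicit hypothesis $\mu < 1$ is essential; without it, $\sum_{s=0}^{N-1} s^{2}\mu^{2s}$ grows with $N$ and $C_{a}$ cannot be chosen uniformly. Everything else is routine — the lower bound is a one-line observation using $g$ alone, and the $\lambda=0$ portion of the upper bound is a geometric series.
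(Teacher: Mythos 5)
Your overall route is the same as the paper's: represent $(A^{s}g)(\lambda)=\int_{-1/4}^{1/4}\hat{a}^{s}(\omega)e^{i2\pi\lambda\omega}\,d\omega$ using the support of $\hat g$, integrate by parts once for $\lambda\neq 0$ to extract a factor $1/\lambda$, bound the boundary term by $\mu^{s}$ and the remaining integral by $s\mu^{s-1}\mathcal{K}/2$, and sum in $s$. Your lower bound is in fact cleaner than the paper's: keeping only the $s=0$ term gives $c_{a}=4/\pi^{2}$ directly from $g(0)=1$, $g(\pm1)=2/\pi$, whereas the paper bounds every term from below by $\frac{2}{\pi}\nu^{s}$ using $\hat a\geq\nu$; both are valid.

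The genuine problem is your reliance on the ``implicit hypothesis'' $\mu<1$. The standing assumption \eqref{growth} is only $\mu\geq\hat a(\omega)\geq\nu>0$ with $|\hat a'|\leq\mathcal{K}$; nothing caps $\mu$ at $1$, and \Cref{prop:spectrum} does not supply such a cap here (it concerns finite $\mathcal{G}$ and infinite time iteration, a different setting from \Cref{prop:density}, where $N$ is finite and fixed). Without $\mu<1$ your series $\sum_{s\geq0}\mu^{2s}$ and $\sum_{s\geq0}s^{2}\mu^{2s}$ diverge, so your claimed constants uniform in $N$ do not exist in general — indeed, if $\nu>1$ then $(A^{s}g)(0)\geq\nu^{s}/2$ forces $\sum_{s=0}^{N-1}|(A^{s}g)(0)|^{2}\to\infty$, so no uniform-in-$N$ upper bound is possible. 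The paper sidesteps this by summing only up to $N-1$ and accepting $N$-dependent constants: its bound is
\begin{equation*}
n^{2}\sum_{s=1}^{N-1}|(a^{(s)}\ast g)(n)|^{2}\leq\frac{\mu^{2(N-1)}-1}{\mu^{2}-1}\cdot\Bigl(\frac{2\mu^{2}}{\pi^{2}}+\frac{(N-1)^{2}\mathcal{K}^{2}}{2\pi^{2}}\Bigr),
\end{equation*}
which suffices for the application in \Cref{prop:density}, where $N$ is a fixed parameter of the sampling set. To repair your argument without adding the hypothesis $\mu<1$, simply replace the infinite geometric series by the finite sums $\sum_{s=0}^{N-1}\mu^{2s}$ and $\sum_{s=0}^{N-1}s^{2}\mu^{2s}$ and let $C_{a}$ depend on $N$ (and note that, read literally with quantifiers uniform in $N$, the lemma as stated would require exactly the kind of restriction you invoked — so the $N$-dependence is the intended reading).
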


\begin{proof}
Note that when $s=0$, we have that
\begin{equation}\label{eqn:bnd4sinc}
    |g(\lambda)|^2=\begin{cases}1,&\lambda=0\\
    \frac{4\sin^2(\lambda\pi/2)}{\lambda^2\pi^2},&\textnormal{otherwise}
    \end{cases}\leq 1/(1+\lambda^2).
\end{equation} When $s\geq 1$, note that the Fourier transformation of $g$ is $$\hat g(\omega)=\begin{cases}
    0,&\omega\in [-1/2,-1/4)\cup(1/4,1/2]\\
    1/2,&\omega=\pm 1/4\\
    1,&\omega\in(-1/4,1/4)
\end{cases}.$$  Using integration by parts, we obtain that 
\begin{align*}
    &(A^{s}g)(n)\\
    =&(a^{(s)}\ast g) (n) = \int_{-1/4}^{1/4} \hat{a}^{s}(\omega)e^{i2\pi n\omega}\,d\omega \\
    =& \frac{\hat{a}(1/4)^{s} e^{i\pi n/2}-\hat{a}(-1/4)^{s} e^{-i\pi n/2}}{i2\pi n} - \frac{s}{i2\pi n}\int_{-1/4}^{1/4}e^{i2\pi n\omega}\hat{a}^{s-1}(\omega)\frac{d \hat{a}}{d\omega}(\omega)\,d\omega.
\end{align*}
 Therefore, by the triangle inequality,
\begin{equation*}
    |n\cdot (a^{(s)}\ast g) (n)| \leq\frac{\mu^{s}}{\pi} + \frac{s\mu^{s-1}\mathcal{K}}{2\pi},
\end{equation*}
and hence
\begin{equation}\label{eqn:bnd4xf}
\begin{aligned}
    n^2\sum_{s=1}^{N-1}|(a^{(s)}\ast g)(n)|^2\leq& \sum_{s=1}^{N-1} \bigg(\frac{\mu^{s}}{\pi} + \frac{s\mu^{s-1}\mathcal{K}}{2\pi}\bigg)^2\\
    \leq& \frac{\mu^{2(N-1)}-1}{\mu^2-1}\cdot\bigg(\frac{2\mu^2}{\pi^2}+ \frac{(N-1)^2\mathcal{K}^2}{2\pi^2}\bigg).
    \end{aligned}
\end{equation} 
In addition, \begin{align}\label{eqn:bnd4f}
   |(a^{(s)}\ast g) (n)| &= \left|\int_{-1/4}^{1/4} \hat{a}^{s}(\omega)e^{i2\pi n\omega}\,d\omega \right|\leq \int_{-1/4}^{1/4}\left|\hat{a}^{s}(\omega)e^{i2\pi n\omega}\,\right|d\omega \leq \mu^s/2.
\end{align}
 Combining \eqref{eqn:bnd4sinc}, \eqref{eqn:bnd4xf},  and \eqref{eqn:bnd4f}, we can derive  \eqref{eqn:upperbnd}.

 For \eqref{eqn:lowerbnd}, %note that if $\omega\in [-1/2+\eta,1/2)$ then from \eqref{growth}, $\hat{a}(\omega)\geq\nu+\kappa\eta$. Hence for $|n|\leq 1$,
 note that for $|n|\leq 1$,
\begin{align*}
    |(a^{(s)}\ast g)(n)| &=\left|2\int_{-1/4}^{1/4}\hat{a}^{s}(\omega)e^{i 2\pi n\omega}\,d\omega\right|\geq\left|2\int_{-1/4}^{1/4}\hat{a}^{s}(\omega)\cos(2\pi n\omega)\,d\omega\right|.
    \end{align*}
Additionally,  note that when $|n|\leq 1$, we have that $\cos(2\pi n \omega)\geq 0$ for all $\omega\in[-1/4,1/4]$. Therefore, 
    \begin{align*}
 |(a^{(s)}\ast g)(n)|\geq &\left|2\nu^{s}\int_{-1/4}^{1/4}\cos(2\pi n\omega)\,d\omega\right|=\left|2\nu^{s}\frac{\sin(n\pi/2)}{n\pi}\right|\geq\frac{2}{\pi}\nu^{s}.%\frac{16}{3\sqrt{2}\pi}(\nu+\kappa/4)^{s}
    \end{align*}
    Therefore,  we have
    $ \sum_{s=0}^{N-1} |(a^{(s)}\ast g)(n)|^2  \geq  \frac{4}{\pi^2}\cdot\frac{\nu^{2(N-1)}-1}{\nu^2-1}$.
\end{proof}
We are now ready to provide the proof of \Cref{prop:density}. The main idea is to effectively utilize the frame inequality in \eqref{framebds} for the specific $g$ described in \Cref{lem:requisite}.
\begin{proof}[The proof of \Cref{prop:density}]
According to \Cref{lem:requisite}, the following two inequality holds  with  constants $c_{a}, C_{a}$: 
\begin{align}
    \label{upp1} &\sum_{s=0}^{N-1} |A^{s}(g)(\lambda)|^2 \leq\frac{C_{a}}{1+\lambda^2},\forall \lambda\in\mathbb{Z}\\
    \label{low1} &\sum_{s=0}^{N-1} |A^{s}(g)(\lambda)|^2 \geq c_{a} \quad\text{ for }\lambda=0,\pm 1, %\quad |x|\leq 3/2. 
\end{align}
where  $g(n)=\begin{cases}
    1,&n=0\\
    \frac{2\sin(n\pi/2)}{n\pi},&\textnormal{otherwise}
\end{cases}$.

Let $N(\Lambda):=\sup_{x\in\mathbb{Z}} \#(\Lambda\cap[x-1,x+1])$. 
From \eqref{low1} and  \eqref{framebds}, we have  that
\begin{align*}
    \#(\Lambda\cap[x-1,x+1])\leq&\sum_{\lambda\in\Lambda\cap[x-1,x+1]}\frac{1}{c_a}\sum_{s=0}^{N-1}|A^s(g)(\lambda-x)|^2\\
    \leq&\frac{1}{c_a}\sum_{\lambda\in\Lambda}\sum_{s=0}^{N-1}|A^s(g)(\lambda-x)|^2\\
    \leq&\frac{c_{\max}}{c_a}\left\|g(\cdot-x))\right\|_{\ell^2(\mathbb{Z})}^2=\frac{c_{\max}}{c_a}\left\|g\right\|_{\ell^2(\mathbb{Z})}^2=\frac{2c_{\max}}{c_{a}}
\end{align*}
Therefore,
\begin{equation} \label{updens}
    \bar{d}(\Lambda)\leq\frac{N(\Lambda)}{2}\leq\frac{c_{max}}{c_{a}}.
\end{equation}
In addition, $ \bar{d}(\Lambda)\leq 3/2$. Thus,  $ \bar{d}(\Lambda)\leq \min\{c_{\max}/{c_a},3/2\}$.

Suppose that there exists some $u\in\mathbb{Z}$ and let $R$ be the largest positive integer such that
$\Lambda\cap [u-R,u+R]=\emptyset$ and
\begin{equation} \label{nonempty}
    \Lambda\cap [u-R-1,u+R+1]\not=\emptyset.
\end{equation}
%Since $g\in\ell^2(\mathbb{Z})$ is translation invariant, 
By translation invariance,  we may assume $u=0$. Substitute $g\in\ell^2(\mathbb{Z})$ to \eqref{framebds}. From \eqref{upp1}, \eqref{updens}
\begin{align*}
   2c_{\min} &=c_{\min}\|g\|_{\ell^2(\mathbb{Z})}^2 \leq \sum_{s=0}^{N-1}\sum_{\lambda\in\Lambda} |g\ast a^{(s)}(\lambda)|^2\leq\sum_{\lambda\in\Lambda}\frac{C_{a}}{1+\lambda^2}\\
&=\left(\sum_{k=0}^{\infty}\sum_{\lambda\in\Lambda\cap [R+2k,R+2k+2]} + \sum_{k=0}^{\infty}\sum_{\lambda\in\Lambda\cap [-R-2k-2,-R-2k]} \right)\frac{C_{a}}{1+\lambda^2}\\
    &\leq 2N(\Lambda)C_{a}\sum_{k=0}^{\infty}\frac{1}{1+(R+2k)^2}\\
    &\leq 2C_{a}\min\left\{\frac{c_{max}}{c_{a}},3/2\right\}\int_{R}^{\infty}\frac{1}{2x^2}\,dx=\min\left\{\frac{C_{a}c_{max}}{Rc_{a}}, \frac{3C_a}{2R}\right\}.
\end{align*}
According to \eqref{nonempty}, we have
\begin{equation*}
    R\leq\min\left\{ \frac{C_{a}c_{\max}}{2c_{a}c_{\min}},\frac{3C_a}{4c_{\min}} \right\}\Rightarrow \underline{d}(\Lambda)\geq\frac{1}{2(R+1)}\geq\frac{1}{4R}\geq\max\left\{\frac{c_{a}c_{min}}{2c_{max}C_{a}},\frac{c_{\min}}{3C_a}\right\}.
\end{equation*}
which is the first part of \eqref{densities}.
\end{proof}
\begin{remark}
 ~~~
 \begin{itemize}
\item From the proof of the lower Banach density in \Cref{prop:density}, it becomes evident that if the frame (lower and upper) bounds are given, the maximal gap between the spatial samples cannot be arbitrarily large.
     \item Note that the conclusion of \Cref{prop:density} is parallel to \cite[Theorem 4.4]{aldroubi2021sampling}, which provides the density requirements for the spatial sampling set $\Lambda\subseteq \mathbb{R}$ for the Parley-Wiener space. Their result is established under the conditions that the sampling time set is an interval and the convolution kernel satisfies similar conditions to ours.  
 \end{itemize} 
\end{remark} 
 \section{Conclusion and future work}
 In this paper, we have  investigated the dynamical sampling problem on $\ell^2(\mathbb{Z})$  driven by a convolution operator.  Our study highlights the critical roles that sampling density and kernel profile play in determining reconstruction stability. Specifically, we found that a finite spatial sampling set often results in insufficient reconstruction capabilities. However, under certain conditions, we demonstrated that  stable reconstruction, as measured by frame lower and upper bounds, effectively controls the maximal gap between spatial samples. We also delved into the stability of dynamical sampling on sub-lattices  of the form $\Lambda = \{mj+c: j\in\mathbb{Z},c=0,1,\cdots,L-1\}$. Our findings elucidate the relationship between 
$L$, $m$, and the properties of the convolution kernel when the initial signal is guaranteed to be reconstructed from the given samples. Particularly, for the specific case where  $L=1$ and the convolution kernel is regular, the stability of the sampling set hinges primarily on its ability to differentiate between signals. We thus established that the dual system generated by dynamical sub-lattice sampling forms a frame for $\ell^2(\mathbb{Z})$ if and only if it is complete. 

There are two lines for future work. Firstly, we would like to generalize our characterizations of dynamical sampling in higher-dimensional settings. Secondly, we will consider various characterizations of the sampling set to ensure the reconstruction of both the initial signal and the convolution operator from the given samples. These investigations will further enhance our understanding and application of dynamical sampling in more complex scenarios. 
 \section*{Acknowledgement}
S. Tang was partially supported by Regents Junior Faculty fellowship, Faculty Early Career Development Awards sponsored by University of California Santa Barbara, Hellman Family Faculty Fellowship, NSF DMS \#2111303 and DMS Career \#2340631.  Y. Xie was supported in part by the NIH grants U01DE029255, U01DE033330,
R01HL166508, R01DE026728 and RO3DE027399 and NSF grants IOS2107215/IIS 2123260. Finally, L. Huang and S. Tang are greatly indebted to Akram Aldroubi for his mentorship and introduction to dynamical sampling problems.

\bibliography{sampling.bib} 
\bibliographystyle{plain}

\end{document}